\documentclass[journal]{IEEEtran}
\usepackage{amsmath,amsfonts,amsthm}
\usepackage{array}
\usepackage[caption=false,font=footnotesize,labelfont=rm,textfont=rm]{subfig}
\usepackage{textcomp}
\usepackage{stfloats}
\usepackage{url}
\usepackage{verbatim}
\usepackage{graphicx}
\usepackage{cite}

\usepackage{amsmath,amsthm}
\usepackage{amssymb,boxedminipage}
\usepackage{amsfonts}
\usepackage{color,xcolor}
\usepackage{bm}

\usepackage{booktabs}
\usepackage{makecell}
\usepackage{ragged2e}
\usepackage[linkcolor=black,urlcolor=black,colorlinks,anchorcolor=black,citecolor=black]{hyperref}

\newtheorem{theorem}{Theorem}

\newtheorem{lemma}{Lemma}

\newtheorem{definition}{Definition}


\usepackage{float}
\usepackage{algorithmic}
\usepackage[linesnumbered,ruled,vlined]{algorithm2e}

\SetKwInput{KwInput}{Input}
\SetKwInput{KwOutput}{Output}





\hyphenation{op-tical net-works semi-conduc-tor IEEE-Xplore}




\begin{document}

\title{An Efficient Method for Realizing Contractions of Access Structures in Cloud Storage}

\author{Shuai Feng~and~Liang Feng Zhang
\thanks{S. Feng and L.F. Zhang (Corresponding author) are with the School of Information Science and Technology, ShanghaiTech University, Shanghai, PR China.
E-mail: \{fengshuai,zhanglf\}@shanghaitech.edu.cn}
}



\maketitle

\begin{abstract}
In single-cloud storage, ciphertext-policy attribute-based encryption (CP-ABE) allows one to encrypt any data under an access structure to a cloud server,
 specifying what attributes are required to decrypt. In multi-cloud storage, a secret sharing scheme (SSS) allows one to split any data into multiple shares, one to a single server,
 and specify which subset of the servers are able to recover the data. It is an
  interesting problem   to remove some attributes/servers but still enable the remaining attributes/servers in every authorized set to recover the data.
 The problem is related to the contraction problem  of access structures for SSSs.
In this paper, we propose a  method
 that can efficiently
 transform  a given SSS for an access structure  to
 SSSs  for contractions of  the access structure.
  We  show its applications in solving  the   attribute removal problem
  in the CP-ABE based single-cloud storage and the data relocating problem in
 multi-cloud storage.
Our method results in  solutions that  require either  less
server storage or even  no additional server storage.
\end{abstract}

\begin{IEEEkeywords}
Cloud storage, access structure, contraction, linear secret sharing, attribute-based encryption.
\end{IEEEkeywords}

\section{Introduction}

\IEEEPARstart{W}{ith} the rapid development of cloud computing in recent years, \emph{cloud storage} \cite{MTB+19,NCD+21} has moved to
 the mainstream of storage technology. It allows one to
 lease computing
 resources  from  cloud service providers  in a pay-per-use manner and
 remotely store/access important data, without
 need to build local data
 centers (including expensive software and hardware infrastructures) from scratch. However,
   many organizations are still reluctant to use cloud to store sensitive
    data. The reason is that they may lose  control of the  data,
   and  information leakage
     may occur due to unauthorized access \cite{DCT+18}.
     How to ensure the
     {\em confidentiality} of cloud data  is an important problem.
	
Storing data with cloud may use two different models:
{\em single-cloud storage} and
{\em multi-cloud storage}.
Single-cloud storage means storing data with a single cloud server.
In this model, the  confidentiality of data may be ensured by
the user encrypting the data and uploading the ciphertexts to
a cloud server. For example, \emph{ciphertext-policy
attribute-based encryption} (CP-ABE) \cite{BSW07,TJL+21} is
a commonly used   encryption technology for fine-grained access control that allows one to
 set a \emph{policy}
to specify who
are eligible to decrypt a ciphertext.
More precisely, every CP-ABE ciphertext is associated   with a policy; every
\emph{user} of the data gets a private key associated
  with several attributes from a set of $n$ {\em attributes} and is eligible to decrypt
 if and only if its attributes   satisfy the policy.
	
In multi-cloud storage \cite{FEJ15,ZYT+18,LQLL14,MTB+19}, one
may generate $n$ {\em shares} of the data   and store  each share  with a different cloud server
to enforce the following   policy: the   data can be reconstructed if
  $\geq t+1$ out of the $n$ shares are available but any $\leq t$ shares leak  information about the data.
In this model, special techniques for  splitting data into shares, such as
 homomorphic secret sharing
(HSS) \cite{BGI16},  have found interesting applications in the field of outsourcing
computations \cite{ZS19,ZW22}. Such techniques allow  each server to
compute a function on its share to produce a {\em partial result} and finally
a user of the data can
  reconstruct the function's
output from all   partial results.
	
	A critical technology used in both single-cloud storage
and multi-cloud storage
 is {\em linear secret sharing schemes} (LSSSs) \cite{OSZ19}.
    A  {\em secret sharing scheme} (SSS) \cite{BR79,Sha79} for a
  set ${\cal P}=\{P_1,\ldots,P_n\}$ of $n$  \emph{participants}   allows a \emph{dealer}
   to generate    $n$ \emph{shares} for a  \emph{secret} $s$, one for each participant, such that any \emph{authorized subset} of
$\cal P$ can reconstruct $s$ with their shares but  any \emph{unauthorized subset} learns   no information about $s$. The set $\Gamma$ of all authorized subsets is
called   an \emph{access structure} and the SSS is said to \emph{realize} $\Gamma$. In general, a set $\Gamma$ of subsets of $\cal P$ is qualified as   an access structure if the superset of any set in $\Gamma$ remains in $\Gamma$.
 SSSs were
 introduced  by Shamir \cite{Sha79} and Blakley \cite{BR79} for
 {\em threshold} access structures   and then extended to any
 {\em general} access structures  by  Ito et al. \cite{ISN89}.
 An SSS is \emph{linear} if both the {\em share generation}  and
the {\em secret reconstruction}  can be accomplished with linear operations.
 Since
 \cite{Sha79,BR79,ISN89},  SSSs have
  been one of the most important building blocks of cryptographic protocols   \cite{BGW88,CGKS95,SW05}.
In particular, the $n$ attributes in the
CP-ABE based single-cloud storage model and the $n$ servers in  multi-cloud storage model may   play the roles of
 the $n$ participants   in SSSs, respectively, and
 the  policies in both models may play the role of
 access structures.

 In this paper, we consider  application scenarios
  where part of the attributes/servers have to be removed such that the left
  attributes/servers in every
  authorized subset remain eligible to access data,  in order to make the access policies
  {\em  less restrictive}.
   Such scenarios may appear in both   storage models.
   For example, in the single-cloud storage model, a patient Alice may
   have encrypted her   electronic health records (EHRs)
 \cite{WYN+23,WZMZ23,YWW23} as a CP-ABE ciphertext with an access policy $\texttt{`hospital A'} \wedge
 \texttt{`branch 1'} \wedge \texttt{`respiratory'}$ such that Bob,
  a respiratory physician in the branch 1 of hospital A   can decrypt the ciphertext.
After an initial diagnosis,  Bob   may conclude that the condition  of Alice is
so complicated that  a consultation by the respiratory physicians
    from all branches (not just branch 1) of hospital A is needed. In this scenario,
    Alice needs to  update the ciphertext  and remove the attribute  \texttt{`branch 1'}
    from the policy such that all involved  physicians are able to decrypt.
    In the multi-cloud storage model,  organizations such as  transaction  platforms
    may collect tons of customer preference data and share the data among $n$ cloud servers. Users of the data may contact $t+1$ out of the $n$ servers, reconstruct the data, perform
    machine learning algorithms, and use the resulting model to   make higher profits.
The users need to pay for the services, as per  the total   amount of data downloaded from the $t+1$ servers.
    As the     data may lose relevance over time and damage the model's accuracy \cite{VHAI22}, both the value
and the privacy level of the data could be reasonably reduced over time.
It  is reasonable  for the organization to gradually reduce the  threshold
 $t$ such that less servers are needed to reconstruct the data over time.
In this scenario, the privacy threshold $t$ may be reduced by gradually closing some of the servers and {\em relocating} the data (shares) on
these servers to the remaining ones.

In both   application scenarios, it suffices for the data owners to consider the problem of
 {\em   how to remove
 an unauthorized subset of the  attributes/servers but still enable
 the remaining attributes/servers in every authorized subset to
 recover the  data}.
In the language of SSS, a more formal description  of the above problem  is as follows:
\begin{itemize}
\item [(p1)] \em
A secret $s$ has been shared according to an access structure $\Gamma$ over a set  ${\cal P}$ of participants and later an unauthorized subset $Q\subseteq {\cal P}$  have to be removed.
 How to  distribute   the shares of $Q$ to the participants
in
${\cal P} \setminus Q$ such that for every  authorized subset $A\in \Gamma$ the participants in
$A\setminus Q$
are still able to reconstruct $s$.
\end{itemize}
Via some abstraction, (p1) is related to the following problem:
\begin{itemize}
\item[(p2)] \em Given an SSS realizing an access structure $\Gamma$   and an unauthorized subset $Q$ of participants, construct a new SSS for $\Gamma_{\cdot Q}=\{A\setminus Q: A\in \Gamma\}$.
\end{itemize}
A solution to     (p2) may provide a solution to   (p1), if  given the shares of $Q$,  every  participant in ${\cal P}\setminus Q$ can combine with its own share to produce a new share, such that
 the new shares  realize
 $\Gamma_{\cdot Q}$ over ${\cal P}\setminus Q$ for the secret $s$.
  In the literature,  $\Gamma_{\cdot Q}$ has been called the \emph{contraction} of $\Gamma$ at $Q$ and the problem (p2) has been studied by \cite{Mar93,NN04}.
  In particular, the ideas of \cite{Mar93,NN04}
  can be extended to solve  (p1), either by
  $Q$ simply moving   their shares to a public storage or
 every other participant.
  However, both solutions  consume {\em additional storage}.
  In this paper, we are interested in solutions that require
  {\em  no additional storage}.

	\subsection{Theoretical Contributions}
	Informally, an SSS for $\Gamma$ is \emph{ideal} if all of the shares are from
	the same domain as the secret \cite{Bri89}. If there is an ideal SSS realizing $\Gamma$, then $\Gamma$ is ideal. In Section \ref{sec:our transformations}, we propose a solution for (p2) under ideal access structures. More precisely, we provide two algorithms: the first one is applicable to $|Q|=1$ and the second one is an extension of the first and is applicable to $|Q|>1$. It is well-known that the shares in every LSSS can be generated by a matrix. Our algorithms are efficient and apply simple linear transformations on the matrix that generates an SSS for $\Gamma$ to output a new SSS for $\Gamma_{\cdot Q}$. While for (p1), the transformation gives a method for $Q$ to distribute their shares: send the shares to every remaining participant and each remaining participant can apply the same transformations on its shares and the shares of $Q$ to obtain its shares in the new SSS. This will keep the size of each remaining participant's share unchanged. In particular, if we apply the proposed algorithm to an ideal LSSS for $\Gamma$, then an ideal LSSS for $\Gamma_{\cdot Q}$ will be obtained.

\subsection{Applications}
	\label{ss:apps}

Our algorithm have applications in both
 multi-cloud storage  and   CP-ABE based single-cloud storage.

\vspace{1mm}
	\noindent \textbf{Multi-cloud storage.}
 In multi-cloud storage, the data owners
split their valuable data into multiple shares and store shares on
multiple independent cloud servers such that the users who have paid for the services are
 eligible to access the data.
Many existing schemes for multi-cloud storage
\cite{WFZ+23,HK23,LCH22,SAR18,ZYT+18}   involve
 a considerable number of servers and the number may be as big as
 100. Closing   some of the servers is reasonable
as the data is gradually devaluing over time and the data owners need to
 economize expenses on server rental.
Our algorithms allow  the data owners to
 {\em  properly relocate} shares on some of the servers  to the other servers
 such that the data is recoverable by  downloading  shares from less servers.
In Section \ref{sec:appmc}, we show our solution and   compare   it  with
three existing solutions.
The comparisons show that our solution is most efficient in terms of cloud storage as
it  requires {\em no additional storage} on the remaining servers.
Our only price  is  a low cost  linear computation on the remaining servers.

\vspace{1mm}
 \noindent\textbf{Single-cloud storage.}
In the CP-ABE based single-cloud storage,
 some of the attributes may become unnecessary
 \cite{JSMG18} and the ciphertext has to be updated
 such that the decrypting information associated with these attributes
 is {\em properly associated} with  the remaining attributes, in order to
 change the access policy. A trivial method is
 downloading and decrypting the ciphertexts, and then re-encrypting
      the messages with $\Gamma_{\cdot Q}$. Its computation and communication
       costs may be high. In Section \ref{sec:app}, we
       propose CP-ABE-CAS,  a novel model of CP-ABE with contractions of access
       structures. In the proposed model, we introduce
        contraction keys, which are generated by the data owner itself, and a contraction
        algorithm such that: (1) the data owner is enabled to dominate the
         policy update, and (2) given the contraction key, the servers can efficiently update
          the            ciphertexts to adapt to a new access policy as per the data
           owner' preferences while the users' private keys remain unchanged.
We construct a   CP-ABE-CAS scheme based on
            Waters' CP-ABE scheme \cite{Wat11}. Experimental results show
            that our scheme may reduce the server-side storage cost and the
             user-side computation/communication cost through efficient update of ciphertexts on
             server-side.

	\subsection{Related Work}

\subsubsection{Attribute-Based Encryption in Single-Cloud Storage}

	\noindent \textbf{Attribute-based encryption.}
 Goyal  et al. \cite{GPSW06} classified ABE \cite{SW05} into two types: key-policy ABE (KP-ABE) and ciphertext-policy ABE (CP-ABE). In CP-ABE, a user's private key is associated with a set of attributes and a ciphertext is associated with a policy.

\vspace{1mm}
	\noindent \textbf{Policy updating.}
In CP-ABE, policy updating \cite{SSW12} refers to the problem
of  changing the access policy associated with a ciphertext. In \cite{SSW12}, the updated access policy is more restrictive than the original, so their construction {\em cannot} support the contraction studied by this work.     Ciphertext-policy attribute-based proxy re-encryption   \cite{LCLS09,YWRL10,LFSW13,LAS+14}
  allows a  semi-trusted proxy to perform policy updating.
Their scheme requires  a private key  whose associated attribute set satisfies the policy
to  generate the  re-encryption key
  and gives   the data owner  {\em no} control over the update of  access policy. In our work, no private key is   required and the data owner has {\em full} control
   over  the update of policy.

  \vspace{1mm}
	\noindent \textbf{Revocation.}
In  CP-ABE,  revocation  \cite{ASLK19} means   revoking the access
 right of a user such that the user is no loner able to decrypt
  a ciphertext that he  used to be able to decrypt.
  Revocation may happen when
  the services purchased by the users have expired.
  It is different from  contraction because
    contraction       removes some attributes from every authorized subset  so that more users  become eligible to decrypt.
For example, Ge et al. \cite{GSB+21} proposed a revocable
       ABE scheme with data integrity protection that
       adds  more attributes to every authorized set.
       The  new access structure is
		 more restrictive  and results in the revocation of
 some authorized users.

  \vspace{1mm}
  \noindent \textbf{Extendable access control.}
 An extendable access control system   \cite{SJG+17} allows a data owner to encrypt its data under an access structure $\Gamma$ and later allows any user whose attribute set satisfies $\Gamma$ to extend $\Gamma$ to a new access structure $\Gamma'$ such that
 any attribute set in   $\Gamma \cup \Gamma'$ is able to access the data.
 When  $\Gamma'=\Gamma_{\cdot Q}$ for some authorized subset $Q$, their scheme gives  contraction. Compared with us, the   data owner in \cite{SJG+17} has
 {\em no} control over the extended access structure and the resulting
   ciphertext becomes  \emph{longer} than the original one. In our work,   the data owner has {\em full} control on the contracted   access structure and
    the resulting ciphertext  is \emph{shorter} than the original one. Lai et al. \cite{LGS+21}  proposed a scheme in a different setting of
    \emph{identity-based} encryption.

	\subsubsection{Multi-Cloud Storage}
	The multi-cloud storage model \cite{FEJ15,LQLL14,MTB+19,XHL+20} has been very popular for
 ensuring data confidentiality in cloud environment. For example, Xiong et al. \cite{XHL+20}
 considered a problem of unbalanced  bandwidth between users and  servers in a
 multi-cloud storage system
and proposed {\em adaptive bandwidth} SSSs based on both Shamir's SSS \cite{Sha79} and Staircase codes.
In this work, we consider a different  problem of relocating data from some servers to
 the other servers. Our scheme can be used to improve the user's communication efficiency
by {\em closing} the servers with excessively low  bandwidth.
	
	\subsubsection{Secret Sharing}
 \

	\noindent \textbf{Dynamic secret sharing.} SSSs that allow dynamic changes to
 access structures have been designed in \cite{Cac95,YYQ09,YL19,TKH15} and called \emph{dynamic} SSSs.
	For  general access structures, the scheme proposed by Cachin \cite{Cac95} allows the participants
 to be added or removed. In particular, when the participants in some
 unauthorized
  subset are removed, their shares will be published so
  that the scheme can realize contraction of the access structure at the subset and \emph{occupy additional storage}.
   Our solution requires \emph{no additional storage}. The schemes in \cite{YYQ09,YL19} did \emph{not}
    consider contractions of access structures when removing some participants.
    In fact, the contraction of an access structure $\Gamma$ at a set $Q$ can also be the union of $\Gamma_{\cdot Q}$ and $\Gamma$, so the access structure
     can also contract through adding {$A\setminus Q$} (where $A\in \Gamma$) as new authorized subsets.
     The schemes in \cite{YYQ09,YL19} allow one to add new authorized subsets  {and are  {\em computationally secure}.
      Our work uses {\em information-theoretic} SSSs.} The schemes of \cite{TKH15} can only realize  \emph{threshold} access structures rather than general access structures. Our transformation is applicable to \emph{any} LSSS, even if the access structure is not threshold.
	
\vspace{1mm}
	\noindent \textbf{Proactive secret sharing.} There is a long line of research on SSS that enables participants
  to update their shares such that the information obtained by any adversary will be invalid.
   Such an SSS was introduced by Herzberg et al. \cite{HJKY95} and called a
   \emph{proactive} SSS. The schemes proposed in
    \cite{HJKY95,Mas17} are only applicable to \emph{static committees}.
      Later, \emph{dynamic} proactive SSSs have been proposed
      in \cite{SLL10,MZW19,NNPV02,NNPV02a}, both for 
              \emph{threshold} access structures \cite{SLL10,MZW19}
              and for general access        structures  \cite{NNPV02,NNPV02a}. But if we focus on
       contractions of access structures, in \cite{NNPV02,NNPV02a}, \emph{the remaining participants are required
        to interact with each other}.
        In contrast, there is \emph{no interaction among the remaining
         participants} in our work.
	
\vspace{1mm}
	\noindent \textbf{Contraction.} Closest to our work are \cite{Sli20,Mar93,NN04}.
Slinko \cite{Sli20} studied several ways to merge two ideal linear SSSs into a new ideal
 linear SSS but did \emph{not} consider contractions of access structures. The work in
  \cite{Mar93,NN04} solved the problem (p2). For the contraction of an access structure
  $\Gamma$ at a set $Q$ of the removed participants, if we assume that the share size of
  each participant is $\ell$, the solutions in
  \cite{Mar93,NN04} require the contracted system to consume \emph{additional storage of at least $|Q|\cdot\ell$}
  to store the shares of the removed participants. In our work, the proposed construction
   allows the remaining participants to combine the shares of $Q$ with their shares to produce new shares so that \emph{no additional storage} is needed.

	\subsection{Organization}
	 Section \ref{sec:preliminaries} provides some basic  definitions and notations.  {Section \ref{sec:our transformations} solves the problem (p2).}
	 Section \ref{sec:appmc} and Section \ref{sec:app} solve the problem (p1) in multi-cloud storage and single-cloud storage, respectively. Finally,
  Section \ref{sec:concluding remarks} contains our concluding remarks.

\section{Preliminaries}
\label{sec:preliminaries}

	For any integer $n>0$, we denote $[n]=\{1,\ldots,n\}$.
	For any vector $\bm s=(s_1,\ldots,s_n)$ and any set $I=\{i_1,\ldots,i_k\}\subseteq [n]$, we denote   $\bm s_I=(s_{i_1},\ldots,s_{i_k})$. In particular, we will write $\bm s_I=(s_i)_{i\in I}$.
	For any $d$-dimensional  vector $\bm t$, we denote  $t_j=\bm t_{\{j\}}$ for any $j\in [d]$.
	For any finite set $\mathcal P$, we   denote by $2^{\mathcal P}$ the \emph{power set} of
	$\mathcal P$, i.e., the set of all subsets of $\mathcal P$.
	Let $\psi:A\rightarrow B$ be a function. For any $b\in B$ (resp. any $C\subseteq B$), we denote $\psi^{-1}(b)=\{a\in A: \psi(a)=b\}$ (resp.  $\psi^{-1}(C)=\{a\in A: \psi(a)\in C\}$).

	\begin{definition}[{\bf Access Structure} \cite{Bei11}]
		Let $\mathcal P=\{P_1,\ldots,P_n\}$ be a set of $n$ participants. A collection $\Gamma\subseteq 2^{\mathcal P}$ is said to be
		\emph{monotone} if it satisfies the property: For any $A,B\in 2^{\mathcal P}$, if $A\in \Gamma$ and $A\subseteq B$, then $B\in \Gamma$. \noindent A collection $\Gamma\subseteq 2^{\mathcal P}$ is said to be an \emph{access structure} over  $\mathcal P$ if it consists of non-empty subsets of $\mathcal P$   and is monotone.
	\end{definition}
	
	Let $\Gamma$ be an access structure over   $\mathcal P=\{P_1,\ldots,P_n\}$.
	Every set in   $\Gamma$ is said to be an \emph{authorized} subset of $\mathcal P$.
	Every set in $2^{\mathcal P}\setminus\Gamma$ is said to be \emph{unauthorized}.  An authorized subset $A\in \Gamma$ is \emph{minimal} in $\Gamma$ if no proper subset of $A$ belongs to $\Gamma$.
	The \emph{basis} of $\Gamma$ consists of   all minimal authorized subsets in $\Gamma$ and  denoted as $\Gamma^-$. 	
	The access structure $\Gamma$ is said to be  \emph{connected} if each participant
	$P_i\in \mathcal P$ belongs to at least one minimal authorized subset in $\Gamma^-$.
	
	Any  access structure can be realized by a secret sharing scheme, which is essentially a distribution scheme with privacy properties.
	\begin{definition}[{\bf Distribution Scheme} \cite{Bei11}]
		\label{DS}
		Let $\mathcal P=\{P_1,\ldots,P_n\}$ be a set of $n$ participants and let
		$S$ be the domain of secrets.
		A \emph{distribution scheme} for sharing the secrets
		in $S$ among the participants in $\mathcal P$ is a pair  $\Pi=(\pi, \mu)$, where $\mu$ is a probability distribution over a finite set
			$R$ of \emph{random strings}, and $\pi:S\times R \rightarrow S_1\times\cdots\times S_n$ is a mapping ($S_i$ is the domain of \emph{shares} of $P_i$ for every $i\in[n]$).
	\end{definition}
	
	With the scheme $\Pi$, a dealer can distribute a secret $s\in S$ by firstly choosing a random string $r\in R$ according to $\mu$, computing a vector of shares $\pi(s,r)=(s_1,\ldots,s_n)$, and privately communicating each
		share $s_i$ to   $P_i$. For a set $A\subseteq \mathcal P$, we denote by $\pi_A(s,r)=(s_i)_{i\in I_A}$ the restriction of $\pi(s,r)$ to $I_A=\{i\in[n]: P_i\in A\}$.
		The efficiency  of $\Pi$  can be  measured by its \emph{information rate}
		$\rho(\Pi)=\log{|S|}/{\max_{i\in [n]}\log{|S_i|}}$.
	
	Without loss of generality, we  can always assume that   $\mu$ is the uniform distribution over a
	properly chosen set $R$ of random strings.
	When $\mu$ is the uniform distribution over $R$, we shall   denote
	$\Pi=\pi$, instead of $\Pi=(\pi,\mu)$.

	\begin{definition}[{\bf Secret Sharing Scheme (SSS)} \cite{Bei11}]
		Let $\mathcal P=\{P_1,\ldots,P_n\}$ be a set of $n$ participants.
		Let $\Gamma$ be an access structure over $\mathcal P$.
		Let $\pi: S \times R\rightarrow S_1\times \cdots \times S_n$  be
		a  distribution scheme for $\mathcal P$.
		The scheme $\pi$ is said to be a \emph{secret sharing scheme} realizing  $\Gamma$  if the following  requirements are satisfied:

		\noindent{\bf Correctness.}  Any  authorized subset of participants can reconstruct a secret  with their shares of the secret. Formally, for every  authorized subset $A= \{P_{i_1},\ldots,P_{i_m}\}\in \Gamma$, there is a reconstruction function $\mathsf{Recon}_A: S_{i_1}\times  \cdots\times S_{i_{m}}\rightarrow S$ such that for any  $s\in S$,
			$\Pr_{r}[\mathsf{Recon}_A\left(\pi_A\left(s,r\right)\right)=s]=1$.
		
\noindent{\bf Perfect Privacy.} Any unauthorized subset of participants
			cannot learn any  information about a secret from their shares of the secret. Formally, for every  unauthorized subset $A=\{P_{i_1},\ldots,P_{i_m}\} \in 2^{\mathcal P}\setminus \Gamma $, for any  $a,b\in S$ and any $\bm s=(s_{i_1}, \ldots,s_{i_m})\in
			S_{i_1}\times \cdots \times S_{i_m}$, $\Pr_r[\pi_A(a,r)=\bm s] =\Pr_r[\pi_A(b,r)=\bm s]$.
	\end{definition}
	
	Beimel \cite{Bei11} showed that for any SSS
	$\pi: S \times R\rightarrow S_1\times \cdots \times S_n$ realizing a connected access structure $\Gamma$, it must be that
	$|S_i|\ge |S|$ for all $i\in[n]$. Thus, the information rate of an SSS for a connected access structure is always $\le 1$. An SSS with information rate   1 is said to be \emph{ideal}. An access structure $\Gamma$ is  \emph{ideal} if there is an ideal SSS realizing $\Gamma$.

	\begin{definition}[{\bf Linear Secret Sharing Scheme (LSSS)} \cite{Wat11}]
		\label{LSSS}
		Let $\mathbb F$ be a finite field. Let $\mathcal P=\{P_1,\ldots,P_n\}$ be a set of $n$ participants.
		Let $\Gamma$ be an access structure over $\mathcal P$.
		Let $\pi: S \times R\rightarrow S_1\times \cdots \times S_n$ be an  SSS realizing $\Gamma$.
		The scheme $\pi$ is said to be \emph{linear} over $\mathbb F$ if
		there exist  a matrix $\bm H=(\bm h_1,\ldots,\bm h_\ell)^\top\in \mathbb{F}^{\ell\times d}$, a target vector $\bm t=(1,0,\ldots,0)^\top\in \mathbb{F}^d$, and a surjective function $\psi:[\ell]\rightarrow\mathcal P$ such that
		the share generation and secret reconstruction procedures are done as follows:
		
		\noindent{\bf Share Generation.} To share a secret $s\in \mathbb{F}$, $d-1$ random field elements $r_2, \ldots,r_d\in\mathbb{F}$ are chosen to form a vector $\bm v=(s,r_2,\ldots,r_d)^\top$. For   every $i\in[n]$,  the  participant $P_i$'s   share $\bm s_i$ is computed as
			$\bm s_i=(\bm h_j^\top \bm v)_{j\in \psi^{-1}(P_i)}$.

		 \noindent{\bf Reconstruction.} For any authorized subset $A\in\Gamma$,  there exist constants $\{\alpha_j:\psi(j)\in A\}$ such that
			$\bm{t}=\sum_{j \in \psi^{-1}(A) }\alpha_j \bm h_j$,
			and thus
$s=\bm t^\top\bm v=\sum_{j\in \psi^{-1}(A) }\alpha_j (\bm{h}_j^\top\bm v).$
	\end{definition}
	
	In Definition \ref{LSSS}, the tuple $\mathcal M=(\mathbb F,\bm H,\bm t,\psi)$ specifies an LSSS
	for $\Gamma$ and
	has been called a \emph{monotone span program} (MSP) \cite{KW93} for $\Gamma$.
	Beimel \cite{Bei11} showed that LSSSs  and MSPs are equivalent: {\em every LSSS for $\Gamma$ can be derived from an MSP for
	$\Gamma$ and vice versa.}
	
	In this paper, we are interested in the transformation from
	an LSSS for $\Gamma$
	to new LSSSs for contractions of $\Gamma$.
	\begin{definition}[{\bf Contraction} \cite{Mar93}]
		\label{contraction}
Let $\mathcal P=\{P_1,\ldots,P_n\}$ be a set of $n$ participants.
		Let $\Gamma$ be an access structure over $\mathcal P$.
		For any $Q\subseteq \mathcal P$, the \emph{contraction} of $\Gamma$ at $Q$, denoted as $\Gamma_{\cdot Q}$, is an access structure   on $\mathcal P\setminus Q$ such that for every $A\subseteq \mathcal P\setminus Q$, $A\in \Gamma_{\cdot Q} \Leftrightarrow A\cup Q\in \Gamma$.
We also say that  $\Gamma$ is \emph{contracted} at $Q$ to $\Gamma_{\cdot Q}$.
	\end{definition}

If $Q\in \Gamma$,  then $(\Gamma_{\cdot Q})^{-}=\{\{P_i\}|P_i\in\mathcal P\setminus Q\}$. If $Q\in 2^\mathcal P\setminus\Gamma$, then $(\Gamma_{\cdot Q})^-$ consists of all the minimal  nonempty subsets of the form $A\cap(\mathcal P\setminus Q)$, where $A$ is taken over $\Gamma^-$. For any two disjoint subsets $Q_1, Q_2\subseteq \mathcal P$,  $(\Gamma_{\cdot Q_1})_{\cdot Q_2}=\Gamma_{\cdot (Q_1\cup Q_2)}$.

	\section{Our Transformations}
	\label{sec:our transformations}
	
	In this section, we show how to transform an ideal LSSS for
	an access structure $\Gamma$ to ideal LSSSs for contractions of $\Gamma$.
	
	Let $\pi: S \times R\rightarrow S_1\times S_2\times \cdots \times S_n$ be an ideal
	LSSS for a {\em connected} access structure
	$\Gamma$ on $\mathcal P=\{P_1,\ldots,P_n\}$. {
 W}e suppose that $\pi$ is equivalent to an
	MSP
$\mathcal M=(\mathbb F, \bm H,\bm t,\psi)$, where $\mathbb{F}$ is a finite field, $\bm H=(\bm h_1,\ldots, \bm h_\ell)^\top$ is an $\ell\times d$ matrix over $\mathbb{F}$,
	$\bm t=(1,0,\ldots,0)^\top\in \mathbb{F}^d$ is a target vector, and $\psi:[\ell]\rightarrow \mathcal P$ is a surjective function that assigns the $\ell$ rows of $\bm H$ to the $n$ participants in $\mathcal P$.
	Since $\pi$ is ideal, we must have that  $|S_i|=|S|$ for every $i\in[n]$, $\ell=n$ and $\psi:[n]\rightarrow \mathcal P$ is a bijection. Without loss of generality, we can suppose that $\psi(i)=P_i$ for every $i\in[n]$.

The following lemma shows that for any unauthorized subset of participants,
if the rows assigned to them form a submatrix of $\bm H$  of rank $r$, then
the last $d-1$ columns of the submatrix must contain an invertible
submatrix of order $r$.
This $r\times r$ submatrix will be used in our transformations.
\begin{lemma}
\label{lem:U}
Let  ${\mathcal M}=(\mathbb F, \bm H,\bm t,\psi)$ be  an MSP that realizes   a connected access structure $\Gamma$ over $\mathcal P$. Let  $Q$ be any unauthorized subset and let ${\bm H_Q}=\big((\bm h_i)_{\psi(i)\in Q}\big)^\top$.
If ${\rm rank}({\bm H}_Q)=r$, then
 there exists a set $W=\{w_1,\ldots,w_r\}\subseteq \psi^{-1}(Q)$ and a set  $K=\{k_1,\ldots,k_r\}\subseteq [d]\setminus\{1\}$ such that the order-$r$ square matrix $\bm U=((\bm h_{w_1})_K,\ldots,(\bm h_{w_r})_K)^\top$ is invertible over $\mathbb{F}$.
 {\em (see Appendix \ref{app:lem1} for the proof)}
 \end{lemma}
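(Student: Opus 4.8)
The plan is to reduce the claim to a standard linear-algebra fact, after first extracting the one property of $Q$ that matters. Recall from Definition~\ref{LSSS}, together with Beimel's MSP/LSSS equivalence, that a set $A$ is authorized precisely when the target vector $\bm t=(1,0,\ldots,0)^\top$ lies in the row space of the rows of $\bm H$ indexed by $\psi^{-1}(A)$. Since $Q$ is \emph{unauthorized}, this supplies the crucial starting point $\bm t\notin\mathrm{rowspace}(\bm H_Q)$. All the work is then to convert ``$\bm t$ avoids the row space'' into ``the first column is inessential for the rank.''

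First I would set up the rank bookkeeping. Write $V=\mathrm{rowspace}(\bm H_Q)\subseteq\mathbb F^d$, so that $\dim V=r$, and let $\bm H_Q^{(1)}$ be the matrix obtained from $\bm H_Q$ by deleting its first column, i.e. retaining only the columns indexed by $[d]\setminus\{1\}$. The rows of $\bm H_Q^{(1)}$ are the images of the rows of $\bm H_Q$ under the coordinate projection $P:\mathbb F^d\to\mathbb F^{d-1}$ that drops the first entry, so that $\mathrm{rowspace}(\bm H_Q^{(1)})=P(V)$ and hence $\mathrm{rank}(\bm H_Q^{(1)})=\dim P(V)$.

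The key step, and the one I expect to be the main obstacle, is to show that discarding the first column does \emph{not} lower the rank, i.e. $\mathrm{rank}(\bm H_Q^{(1)})=r$. Applying rank--nullity to $P|_V$ gives
\[
\dim P(V)=\dim V-\dim\bigl(V\cap\ker P\bigr)=r-\dim\bigl(V\cap\langle\bm t\rangle\bigr),
\]
since $\ker P=\langle\bm t\rangle$ (the vectors supported on the first coordinate are exactly the multiples of $\bm t=(1,0,\ldots,0)^\top$). This is the one place where the unauthorized hypothesis enters: because $\bm t\notin V$, the line $\langle\bm t\rangle$ meets $V$ only in $\bm 0$, so $\dim(V\cap\langle\bm t\rangle)=0$ and therefore $\dim P(V)=r$.

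Finally I would invoke the standard fact that any matrix of rank $r$ contains an invertible $r\times r$ submatrix, and read off $W$ and $K$ from the right places. From $\bm H_Q^{(1)}$ select $r$ linearly independent rows, whose indices lie in $\psi^{-1}(Q)$ and form $W=\{w_1,\ldots,w_r\}$; the resulting $r\times(d-1)$ matrix still has rank $r$, hence possesses $r$ linearly independent columns, whose indices lie in $[d]\setminus\{1\}$ and form $K=\{k_1,\ldots,k_r\}$. The matrix $\bm U=\bigl((\bm h_{w_1})_K,\ldots,(\bm h_{w_r})_K\bigr)^\top$ is then an $r\times r$ matrix of rank $r$, and so invertible over $\mathbb F$, as required. (The degenerate case $r=0$, where $\bm H_Q$ vanishes, is handled vacuously with $W=K=\emptyset$.)
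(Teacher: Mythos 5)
Your proposal is correct and is essentially the paper's argument in contrapositive-free form: both proofs reduce the claim to the fact that, because $Q$ is unauthorized, $\bm t\notin\mathrm{rowspace}(\bm H_Q)$, and hence deleting the first column cannot drop the rank, after which an invertible $r\times r$ submatrix is extracted. The paper packages this as a proof by contradiction (a dependency among the projected rows would force a nontrivial combination equal to $(\beta,0,\ldots,0)$, with both $\beta\neq 0$ and $\beta=0$ yielding contradictions), whereas you argue directly via rank--nullity with $\ker P=\langle\bm t\rangle$; the content is the same.
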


	We will start with an algorithm ({Algorithm \ref{alg:s1}}) that
	takes    $\mathcal M$
	and an unauthorized subset $Q \subseteq \mathcal P$ with $|Q|=1$ as input (w.l.o.g., $Q=\{P_n\}$) and
	outputs a new ideal LSSS
	$\mathcal M'=(\mathbb F, \bm H',\bm t,\psi')$ for  $\Gamma_{\cdot Q}$, and then show an extended algorithm  ({Algorithm \ref{alg:s2}}) for any unauthorized subset
 $Q$  with $|Q|=m>1$ (w.l.o.g.,  $Q=\{P_{n-m+1},\ldots,P_n\}$).
	\begin{algorithm}
		\label{alg:s1}
		\caption{Contraction at $Q$ with $|Q|=1$}
		\KwInput{$\mathcal M=(\mathbb F, \bm H,\bm t,\psi)$, $Q=\{P_n\}$}
		\KwOutput{$\mathcal M'=(\mathbb F, \bm H',\bm t,\psi')$}
		
		 Choose $k\in [d]\setminus \{1\}$ such that $h_{nk}\ne 0$;
		
		\For{$i\in [n-1]$}{
			$\bm{h}_i'=\bm{h}_i-\frac{h_{ik}}{h_{nk}}\bm{h}_n$;
		}
		
		$\bm{H}'=(\bm{h}_1',\ldots,\bm{h}_{n-1}')^\top$;
		
		Define $\psi':[n-1]\rightarrow \mathcal P\setminus Q$  such that
		$\psi'(i)=P_i$ for every $i\in [n-1]$;
		
		\Return $\mathcal M'=(\mathbb F,\bm H',\bm t,\psi')$;

	\end{algorithm}

The step 1 of  Algorithm \ref{alg:s1}
{is always feasible}, due to Lemma \ref{lem:U}.
 It is also clear that the output  $\mathcal M'$ of {Algorithm \ref{alg:s1}}  gives an ideal SSS. Below we show that $\mathcal M'$ realizes $\Gamma_{\cdot Q}$.
	
	\begin{theorem}
		\label{1}
		If $\mathcal M$ is an ideal LSSS realizing the access structure
		$\Gamma$ over $\mathcal P=\{P_1,\ldots,P_n\}$, then for $Q=\{P_n\}\in 2^{\mathcal P}\setminus \Gamma$,  the ideal LSSS $\mathcal M'$ output by   Algorithm    \ref{alg:s1} realizes $\Gamma_{\cdot Q}$. {\em (see Appendix \ref{app:thm1} for the proof)}
	\end{theorem}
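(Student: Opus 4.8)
The plan is to use the standard monotone span program criterion: an MSP $(\mathbb{F},\bm H,\bm t,\psi)$ realizes $\Gamma$ precisely when, for every $A\subseteq\mathcal P$, the target vector $\bm t$ lies in the row space of $\bm H_A$ if and only if $A\in\Gamma$; membership of $\bm t$ in the span gives reconstruction for authorized sets, while non-membership for unauthorized sets yields perfect privacy. Since $\mathcal M$ is ideal with $\psi(i)=P_i$, this criterion holds for $\mathcal M$. First I would observe that $\mathcal M'$ is a genuine ideal MSP on $\mathcal P\setminus Q=\{P_1,\ldots,P_{n-1}\}$: each remaining $P_i$ is still assigned the single row $\bm h_i'$ and the target vector is unchanged, so it suffices to verify the span criterion for $\mathcal M'$ relative to $\Gamma_{\cdot Q}$. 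By Definition \ref{contraction}, for $A\subseteq\mathcal P\setminus Q$ we have $A\in\Gamma_{\cdot Q}\iff A\cup Q\in\Gamma\iff \bm t\in\mathrm{span}\{\bm h_i:P_i\in A\cup Q\}$, the last step by the criterion applied to $\mathcal M$. Hence the whole theorem reduces to the single equivalence
\[
\bm t\in\mathrm{span}\{\bm h_i':P_i\in A\}\iff \bm t\in\mathrm{span}\big(\{\bm h_i:P_i\in A\}\cup\{\bm h_n\}\big).
\]

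The engine of this equivalence is that the elementary row operations in Algorithm \ref{alg:s1} do not change the augmented span. Writing $c_i=h_{ik}/h_{nk}$, each new row is $\bm h_i'=\bm h_i-c_i\bm h_n$, so $\bm h_i'$ together with $\bm h_n$ generates the same subspace as $\bm h_i$ together with $\bm h_n$. Ranging over all $P_i\in A$ gives
\[
\mathrm{span}\big(\{\bm h_i':P_i\in A\}\cup\{\bm h_n\}\big)=\mathrm{span}\big(\{\bm h_i:P_i\in A\}\cup\{\bm h_n\}\big).
\]
The forward direction of the reduced equivalence is then immediate: $\mathrm{span}\{\bm h_i':P_i\in A\}$ is contained in the left-hand space above, which equals the right-hand target space, so $\bm t\in\mathrm{span}\{\bm h_i':P_i\in A\}$ forces $\bm t$ into the span of the original rows assigned to $A\cup Q$.

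The converse is where the design of the algorithm really matters, and I expect it to be the crux. Suppose $\bm t=\sum_{P_i\in A}\beta_i\bm h_i'+\gamma\bm h_n$ for some scalars $\beta_i,\gamma$. The point of choosing column $k$ and subtracting the correct multiple of $\bm h_n$ is exactly that every new row $\bm h_i'$ has a zero in coordinate $k$: indeed $h_{ik}'=h_{ik}-c_ih_{nk}=0$. Since $k\neq 1$, the target $\bm t=(1,0,\ldots,0)^\top$ also has $k$-th coordinate $0$. Reading off coordinate $k$ of the displayed identity gives $0=\gamma h_{nk}$, and because $h_{nk}\neq 0$ (which is what Lemma \ref{lem:U} guarantees is achievable at Step 1), we conclude $\gamma=0$. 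Hence $\bm t=\sum_{P_i\in A}\beta_i\bm h_i'\in\mathrm{span}\{\bm h_i':P_i\in A\}$, which completes the reduced equivalence and therefore proves that $\mathcal M'$ realizes $\Gamma_{\cdot Q}$.
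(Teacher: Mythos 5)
Your proof is correct and follows essentially the same route as the paper's: both directions rest on the identities $\bm h_i'=\bm h_i-\frac{h_{ik}}{h_{nk}}\bm h_n$ and the observation that coordinate $k$ (where $t_k=0$ and $h_{nk}\neq 0$) forces the coefficient of $\bm h_n$ to vanish, which is exactly the paper's Equations (11)--(12). The only difference is organizational — you phrase it as a single span equivalence under the MSP criterion, while the paper writes out the two inclusions $\Gamma'\subseteq\Gamma_{\cdot Q}$ and $\Gamma_{\cdot Q}\subseteq\Gamma'$ and solves explicitly for $\alpha_n$ — but the underlying argument is identical.
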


	For   $Q=\{P_{n-m+1},\ldots,P_n\}$,
	we can  iteratively
	performing  {Algorithm \ref{alg:s1}}
	for all $\{P_j\}\subseteq Q$.
	However, there is a simpler  one-step transformation (see  {Algorithm \ref{alg:s2}}).
	
	\begin{algorithm}
		\label{alg:s2}
		\caption{Contraction at $Q$ with $|Q|>1$}
		\KwInput{$\mathcal M=(\mathbb F, \bm H,\bm t,\psi)$, $Q=\{P_{n-m+1},\ldots,P_n\}$}
		\KwOutput{$\mathcal M'=(\mathbb F, \bm H',\bm t,\psi')$}
		
		Compute   {$r={\rm rank}((\bm h_{n-m+1},\ldots, \bm h_n)^\top)$};
		
 Find  a set   $W=\{w_1,\ldots,w_r\}\subseteq [n]\setminus [n-m]$
		and a set $ K=\{k_1,\ldots,k_r\}\subseteq [d]\setminus\{1\} $ such that  the   square  matrix $\bm{U}=((\bm h_{w_1})_K,\ldots,(\bm h_{w_r})_K)^\top$ is invertible;
		
		\For{$i\in [n-m]$}{
			Compute a new vector $\bm h_i'$ such that
$\bm{h}_i'^\top=\bm{h}_i^\top-(\bm h_{i}^\top)_K\cdot \bm{U}^{-1}\cdot (\bm{h}_{w_1}, \ldots, \bm{h}_{w_r})^\top$;
		}
		
		$\bm{H}'=(\bm{h}_1',\ldots,\bm{h}_{n-m}')^\top$;
		
		Define
		$\psi': [n-m]\rightarrow \mathcal P\setminus Q$ such that $\psi'(i)=P_i$
		for every $i\in[n-m]$;
		
		\Return $\mathcal M'=(\mathbb F,\bm H',\bm t,\psi')$;
		
	\end{algorithm}
	
	Likewise, the step 2 of   {Algorithm \ref{alg:s2}} {is always feasible} due to
Lemma \ref{lem:U} and  the output   $\mathcal M'$ of Algorithm \ref{alg:s2} is ideal. Theorem \ref{Qmulti} shows that $\mathcal M'$ exactly realizes $\Gamma_{\cdot Q}$.

	\begin{theorem}
		\label{Qmulti}
		If $\mathcal M$ is an ideal LSSS realizing the access structure
		$\Gamma$ over $\mathcal P=\{P_1,\ldots,P_n\}$, then for $Q=\{P_{n-m+1},\ldots,P_{n}\}\in 2^\mathcal P\setminus\Gamma$, the ideal LSSS $\mathcal M'$ output by  {Algorithm \ref{alg:s2}}  realizes  $\Gamma_{\cdot Q}$.
{\em (see Appendix \ref{app:thm2} for the proof)}
	\end{theorem}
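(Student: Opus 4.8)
The plan is to reduce everything to the standard monotone-span-program membership criterion: for the MSP $\mathcal M=(\mathbb F,\bm H,\bm t,\psi)$ realizing $\Gamma$, a set $A$ is authorized iff the target vector lies in the span of the rows assigned to it, i.e. $A\in\Gamma \Leftrightarrow \bm t\in{\rm span}\{\bm h_j:\psi(j)\in A\}$; the forward direction is the reconstruction clause of Definition \ref{LSSS} and the converse is exactly perfect privacy. Applying the same criterion to $\mathcal M'$, the theorem amounts to showing that for every $A\subseteq\mathcal P\setminus Q$ one has $\bm t\in{\rm span}\{\bm h_i':P_i\in A\}$ iff $A\cup Q\in\Gamma$ (using $A\in\Gamma_{\cdot Q}\Leftrightarrow A\cup Q\in\Gamma$). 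So the whole statement is about how the row transformation of Algorithm \ref{alg:s2} moves these spans around.

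The two structural facts I would establish first are as follows. Write $\bm H_W$ for the $r\times d$ matrix whose rows are $\bm h_{w_1}^\top,\ldots,\bm h_{w_r}^\top$. First, the update $\bm h_i'^\top=\bm h_i^\top-(\bm h_i^\top)_K\bm U^{-1}\bm H_W$ writes each $\bm h_i'$ as $\bm h_i$ plus a linear combination of $\{\bm h_w:w\in W\}$, and this relation is invertible, so $\bm h_i$ is conversely $\bm h_i'$ plus a combination of the same vectors. Since $W\subseteq\psi^{-1}(Q)$ has size $r={\rm rank}(\bm H_Q)$ and the order-$r$ matrix $\bm U$ (the $K$-columns of the $W$-rows) is invertible by Lemma \ref{lem:U}, the rows $\{\bm h_w:w\in W\}$ are independent and hence form a basis of the whole $Q$-row space ${\rm span}\{\bm h_j:\psi(j)\in Q\}$. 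Combining these, for every $A\subseteq\mathcal P\setminus Q$,
\[
{\rm span}\big(\{\bm h_i':P_i\in A\}\cup\{\bm h_w:w\in W\}\big)={\rm span}\{\bm h_j:\psi(j)\in A\cup Q\}.
\]
Second, restricting the update to the columns indexed by $K$ and using $(\bm H_W)_K=\bm U$ gives $(\bm h_i')_K=\bm 0$ for every $i\in[n-m]$; and $(\bm t)_K=\bm 0$ as well because $K\subseteq[d]\setminus\{1\}$.

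With these in hand the two directions are short. For correctness, suppose $A\in\Gamma_{\cdot Q}$, i.e. $A\cup Q\in\Gamma$, so $\bm t$ lies in the right-hand span above; write $\bm t=\sum_{P_i\in A}\beta_i\bm h_i'+\sum_{j=1}^r\gamma_j\bm h_{w_j}$. Projecting this identity onto the $K$-columns kills every $\bm h_i'$ term and leaves $\bm 0=\sum_{j=1}^r\gamma_j(\bm h_{w_j})_K$; since the $(\bm h_{w_j})_K$ are the rows of the invertible matrix $\bm U$ they are independent, forcing all $\gamma_j=0$. Hence $\bm t\in{\rm span}\{\bm h_i':P_i\in A\}$, which is exactly reconstructability of $A$ in $\mathcal M'$. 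For privacy, suppose $A\notin\Gamma_{\cdot Q}$, i.e. $A\cup Q\notin\Gamma$. If $A$ could reconstruct in $\mathcal M'$, then $\bm t\in{\rm span}\{\bm h_i':P_i\in A\}$, which is contained in the left-hand span above and therefore lies in ${\rm span}\{\bm h_j:\psi(j)\in A\cup Q\}$; by the MSP criterion this forces $A\cup Q\in\Gamma$, a contradiction. Idealness was already noted before the theorem (one row per remaining participant), so this finishes the proof.

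I expect the only delicate point to be the decoupling step in the correctness argument — that the auxiliary rows $\bm h_{w_j}$ genuinely drop out — which is precisely where the two design choices of the algorithm pay off: the new rows are engineered to vanish on $K$ while the removed rows stay independent there. As a sanity check and alternative route, one can also obtain the result by iterating Algorithm \ref{alg:s1} $m$ times and invoking Theorem \ref{1} together with the identity $(\Gamma_{\cdot Q_1})_{\cdot Q_2}=\Gamma_{\cdot(Q_1\cup Q_2)}$, after noting that each singleton removed at a given stage stays unauthorized because it sits inside the unauthorized set $Q$; but the one-step analysis above is cleaner and matches Algorithm \ref{alg:s2} directly.
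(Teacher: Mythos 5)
Your proof is correct and follows essentially the same route as the paper's: both reduce the claim to the MSP span criterion and both hinge on projecting onto the columns indexed by $K$, where $(\bm t)_K=\bm 0$ and the invertibility of $\bm U$ eliminate the contribution of $Q$'s rows, with monotonicity handling the reverse inclusion. The only difference is organizational --- the paper decomposes $\bm t$ over the original rows of $A\cup Q$, solves explicitly for the $W$-coefficients via $\bm U^{-1}$, and substitutes back, whereas you decompose over $\{\bm h_i':P_i\in A\}\cup\{\bm h_w:w\in W\}$ and show the auxiliary coefficients vanish; the two computations are interchangeable.
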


\section{Application in Multi-cloud Storage}
	\label{sec:appmc}

	In this section, we show an application of our transformation in multi-cloud storage.
As stated in problem (p1), the scenario we will consider is as follows:
A user has used an ideal LSSS $\pi$ for an ideal access structure $\Gamma$ to share a
	secret $s$ as $n$ shares $s_1,\ldots,s_n$ and stored the share $s_i$ with a server $P_i$ for every $i\in[n]$. Later the user may want to   unsubscribe the servers in some unauthorized subset
	$Q\subseteq \{P_1,\ldots,P_n\}$.
We need to figure out how to   distribute the shares of $Q$ to the servers in $\mathcal P\setminus Q$ such that  for every  authorized subset $A\in \Gamma$ the servers in
$A\setminus Q$
are still able to reconstruct $s$.
In particular,  the dealer should   be not involved in the process.
In this section, we will discuss four possible solutions, the first one of which is
based on our transformation from Section \ref{sec:our transformations}, and show that ours
is superior than the others.
Furthermore, to understand clearly the four solutions, a toy example is given in Appendix \ref{app_toyexmaple}.

	\subsection{Solutions to  the  Storage Relocation Problem}
 \label{sec:mcs}
	\subsubsection{Our Method}\label{ourmd}
	Referring to Algorithm \ref{alg:s2}, let
  $\mathcal M$ be the ideal LSSS $\pi$ for $\Gamma$ and let $Q=\{P_{n-m+1},\ldots,P_n\}$.
On  input $(\mathcal M,Q)$, Algorithm \ref{alg:s2}  provides a solution to (p2) and   outputs
an ideal LSSS $\mathcal M'=(\mathbb F,\bm H',\bm t,\psi')$ for $\Gamma_{\cdot Q}$, where
${\bm H}'=({\bm h}'_1,\ldots,{\bm h}'_{n-m})^\top$ and  every  ${\bm h}'_i$
is computed as
\begin{align}
\label{eqn:hpi}
\bm{h}_i'^\top=\bm{h}_i^\top-(\bm h_{i}^\top)_K\cdot \bm{U}^{-1}\cdot (\bm{h}_{w_1}, \ldots, \bm{h}_{w_r})^\top
\end{align}
at step 4 of Algorithm \ref{alg:s2}.
If ${\bm v}=(s,r_2,\ldots,r_d)$ is  used for computing the original
 shares $s_1,\ldots,s_n$ of all servers, i.e.,
 $s_i={\bm h}^\top_i\cdot {\bm v}$ for all $i\in[n]$, then
$\{s'_i=({\bm h}'_i)^\top\cdot{\bm v}: i\in [n-m]\}$
will be $n-m$ shares that realize the scheme ${\cal M}'$ for sharing $s$.
In particular, as per  (\ref{eqn:hpi}), we have that
\begin{align}
\label{eqn:spi}
s'_i=s_i-(\bm h_{i}^\top)_K\cdot \bm{U}^{-1} (s_{w_1},\ldots,s_{w_r})^\top
\end{align}
is a linear combination of $P_i$'s share $s_i$ and $Q$'s shares
$s_{w_1},\ldots,s_{w_r}$ with constant  coefficients.
Our method for (p1) simply {\em requires each server $P_i$ to
perform the computation of {\em (\ref{eqn:spi})}
and store the new share $s'_i$.}

 Intuitively, our method for (p1) requires every remaining server to {\em linearly combine} its share with the shares of the removed servers.  Thereby for any LSSS for access structure $\Gamma$ and any unauthorized subset $Q$, we represent the new scheme for $\Gamma_{\cdot Q}$ obtained by using this method as $\pi_{\rm lc}$.

 In particular, a more intuitive solution for (p1) has been shown in Appendix \ref{appendixE} and is {equivalent} to our method. So its performance is the same as our method.

	\subsubsection{Martin's Method}\label{MT}

	In Martin \cite{Mar93}, an SSS $\pi: S \times R\rightarrow S_1\times S_2\times \cdots \times S_n$  for $\Gamma$ is represented as a matrix $\bm{M}$ that has  $|S\times R|$ rows
	and $n$  columns. Each row of the matrix is labeled with a pair
	$(s,r)\in S\times R$ and for every $i\in[n]$, the $i$th column of the matrix
	is labeled with $P_i$. For any $(s,r)\in S\times R$,
	and any $i\in[n]$,
	the entry of $\bm M$ in  row $(s,r)$ and   column
	$P_i$ is defined as the $i$th element of $\pi(s,r)$, i.e.,
	  $P_i$'s share of  $s\in S$
	when $r\in R$ is used as the random string for sharing.
	Given $\bm M$ and $Q$,
	Martin \cite{Mar93} has a transformation from $\bm M$ to a  SSS $\bm M'$
for $\Gamma_{\cdot Q}$. For 	$Q=\{P_{n-m+1},\ldots,P_{n}\}$,
the transformation can be described as   follows:
		\begin{itemize}
\em
			\item Choose    $\bm \alpha=(\alpha_1,\ldots,\alpha_m)\in S_{n-m+1}\times \cdots\times S_{n}$;
			\item  Define $\bm M'$  as the submatrix of $\bm M$ with rows
 labeled by $\{(s,r): (s,r)\in S\times R,      \pi_Q(s,r)=\bm \alpha\}$ and columns labeled by $\mathcal P\setminus Q$.
		\end{itemize}

	The transformation as above provides a   solution to   (p2).
To  extend it to solve (p1), the key point is enabling  the servers in $\mathcal P\setminus Q$ to have access to the shares of $Q$. A possible method  is to
{\em transfer $Q$'s shares
$\bm\alpha$
 to a pubic storage} so that the servers in $\mathcal P\setminus Q$ can determine
    $\bm M'$. The new scheme from Martin's method remains \emph{ideal} and is represented as $\pi_{\rm ps}$.

	\subsubsection{Nikov-Nikova Method}\label{NN}

Let $\mathcal M=(\mathbb F,\bm H,\bm t,\psi)$ be an MSP and an ideal LSSS for
	$\Gamma$. Let $Q$ be an unauthorized subset.
Nikov and Nikova \cite{NN04} has a method of  constructing an MSP $\mathcal M'=(\mathbb F,\bm H',\bm t,\psi')$ for $\Gamma_{\cdot Q}$ out of
 $(\mathcal M,Q)$:
		\begin{itemize}
\em
			\item  Let $\bm H_{\psi^{-1}(Q)}$ be the rows assigned to $Q$ in   $\mathcal M$. The matrix $\bm H'$ is obtained by appending $n-m-1$ copies
			of $\bm H_{\psi^{-1}(Q)}$ at the end of $\bm H$.
			\item The  map $\psi'$ is defined such that each server $P_i\in \mathcal P\setminus Q$ is assigned both one of the $n-m$ copies of $\bm H_{\psi^{-1}(Q)}$
			and the rows  $\bm H_{\psi^{-1}(P_i)}$.
		\end{itemize}

 More precisely, this solution to (p2) requires each server  in $Q$
	to make its share  accessible to  $\mathcal P\setminus Q$.
For (p1), this idea simply {\em requires  the servers in $Q$ to transfer their shares to every server
in $\mathcal P\setminus Q$}. Every server in $\mathcal P\setminus Q$ needs additional storage to {\em individually store a copy of $Q$'s shares}. The new scheme ${\cal M}'$ turns out to be \emph{non-ideal} and is represented as $\pi_{\rm is}$.
	
	\subsubsection{Extended Nikov-Nikova Method}\label{ENN}

	In Nikov-Nikova method, every server in $\mathcal P\setminus Q$ has to store a copy of $Q$'s shares. When $Q$ is removed, for every   $A\in \Gamma$, to enable the servers in $A\setminus Q$ to reconstruct $s$, $A\setminus Q$ only need to
{\em collectively store a copy of the shares of $Q$}. Thus the shares of $Q$ may be properly hand out to $\mathcal P\setminus Q$ such that every server in $\mathcal P\setminus Q$ only needs to hold a part of   $Q$'s shares, and for every $A\in \Gamma$,   $A \setminus Q$
 can reassemble the shares of $Q$ and then combine with their old shares to recover $s$. The new SSS is \emph{non-ideal}. We represent the new scheme as $\pi_{\rm cs}$.

	\subsection{Comparison}
	\label{cstas}
	In this section, we restrict our attention to the threshold access structures and compare among four methods mentioned in Section \ref{sec:mcs}, in terms of storage and information rate of the new scheme after contraction.
	
	\subsubsection{Theoretical Analysis}
	Let $t,n$ be integers such that $1\leq t\leq n$ and let $\mathcal P=\{P_1,\ldots,P_n\}$
	be a set of $n$ servers.
	Shamir's $(t,n)$-threshold secret sharing scheme (TSSS) \cite{Sha79} realizes a $t$-out-of-$n$ threshold access structure $\Gamma=\{A\subseteq \mathcal P:|A|\ge t\}$.
	In Shamir's scheme $\pi_0$,    a finite field $\mathbb{F}_p$ of prime order $p>n$ is chosen as the domain of secrets and the $n$ servers $P_1,\ldots,P_n$ are associated with
	$n$ distinct nonzero  field elements  $x_1,\ldots,x_n\in \mathbb F_p$, respectively. To share a secret $s\in \mathbb F_p$, the dealer chooses $t-1$  field elements $a_1,\ldots,a_{t-1}\in \mathbb F_p$ randomly, defines a  polynomial $P(x)=s+\sum_{j=1}^{t-1}a_jx^j$, and assigns a share $s_i=P(x_i)$ to the  server $P_i$ for all $i\in[n]$. Any $\ge t$ servers can
	reconstruct $s$ by  interpolating the polynomial $P(x)$ with their shares. It has been well showed in \cite{Sti92} that Shamir's $(t,n)$-TSSS is ideal and linear.
	That is, for Shamir's $(t,n)$-TSSS $\pi_0$, there exists an MSP $\mathcal M=(\mathbb F_p,\bm H,\bm t,\psi)$ where $\bm H=(\bm h_1,\ldots,\bm h_n)^\top\in \mathbb F^{n\times t}$ such that $\bm h_i=(x_i^0,\ldots,x_i^{t-1})$, $\bm t=(1,0,\ldots,0)^\top \in \mathbb F_p^t$, and $\psi(i)=P_i$ for each $i\in [n]$.
	In particular, for any unauthorized subset $Q$ of size $m$, $\Gamma_{\cdot Q}$ will be a $(t-m)$-out-of-$(n-m)$ threshold access structure. We shall apply our method (Section \ref{ourmd}), Martin's method (Section \ref{MT}), Nikov-Nikova method (Section \ref{NN}) and extended Nikov-Nikova method (Section \ref{ENN}), respectively, to generate four new schemes $\pi_{\rm lc},\pi_{\rm ps},\pi_{\rm is},\pi_{\rm cs}$ for $\Gamma_{\cdot Q}$. Let $z$ be the number of secrets that have been shared using $\pi_0$, $\ell$ be the storage occupied by every share, and $L(\pi)$ be the total storage occupied by all shares in an SSS $\pi$.
	
	\begin{table}[h]
		\centering
		\caption{The total storage ($L$) and the information rate ($\rho$)}
		\label{tbl1}
		\begin{tabular}{lcc}
			\toprule
			$\pi$ & $L(\pi)$ & $\rho(\pi)$\\
			\midrule
			$\pi_{\rm ps}$ & $n\ell z$ & 1 \\
			$\pi_{\rm is}$ & $(n+(n-m-1)m)\ell z$ & ${(m+1)^{-1}}$ \\
			$\pi_{\rm cs}$ & $(n+(n-t)m)\ell z$ & $\lceil {m(n-t+1)}/(n-m) +1\rceil^{-1}$\\
			$\pi_{\rm lc}$ & $(n-m)\ell z$ & 1\\
			\bottomrule
		\end{tabular}
	\end{table}

	In Table \ref{tbl1}, because $Q$ is unauthorized, we have that $m\le t-1$. It is not difficult to observe that when $m>0$, $L(\pi_{\rm lc})<L(\pi_{\rm ps})<L(\pi_{\rm cs})\le L(\pi_{\rm is})$ and $\rho(\pi_{\rm lc})=\rho(\pi_{\rm ps})>\rho(\pi_{\rm cs})\ge\rho(\pi_{\rm is})$. That is, our method gives the most storage-efficient and the most communication-efficient (i.e., highest information rate) scheme for $\Gamma_{\cdot Q}$ among the four methods.

 \vspace{1mm}
	\noindent {\bf Remark.}
	As a special case of HSS, a $d$-multiplicative secret sharing \cite{BIW10,TLM18} allows a user to share $d$ secrets among the universal set $\mathcal P$ of $n$ servers such that every server is able to locally convert its shares to a partial result and the sum of all server's partial results is equal to the product of the $d$ secrets. In particular, Barkol \cite{BIW10} has showed that Shamir's $(t,n)$-TSSS is $\lfloor n/t\rfloor$-multiplicative. Thus, the four schemes $\pi_{\rm lc},\pi_{\rm ps},\pi_{\rm is},\pi_{\rm cs}$ are $\lfloor (n-m-1)/(t-m)\rfloor$-multiplicative. Since $\lfloor (n-m-1)/(t-m)\rfloor\ge \lfloor n/t\rfloor$, the new schemes allow to homomorphically compute the product of more secrets.
 \vspace{1mm}
	
	\subsubsection{Performance Analysis}
	To confirm the advantages of our method in terms of storage and information rate, a simple experiment is performed.

 \begin{figure}[!t]
		\centering
		\subfloat[Storage]{\label{suba} \includegraphics[width=2.7 in]{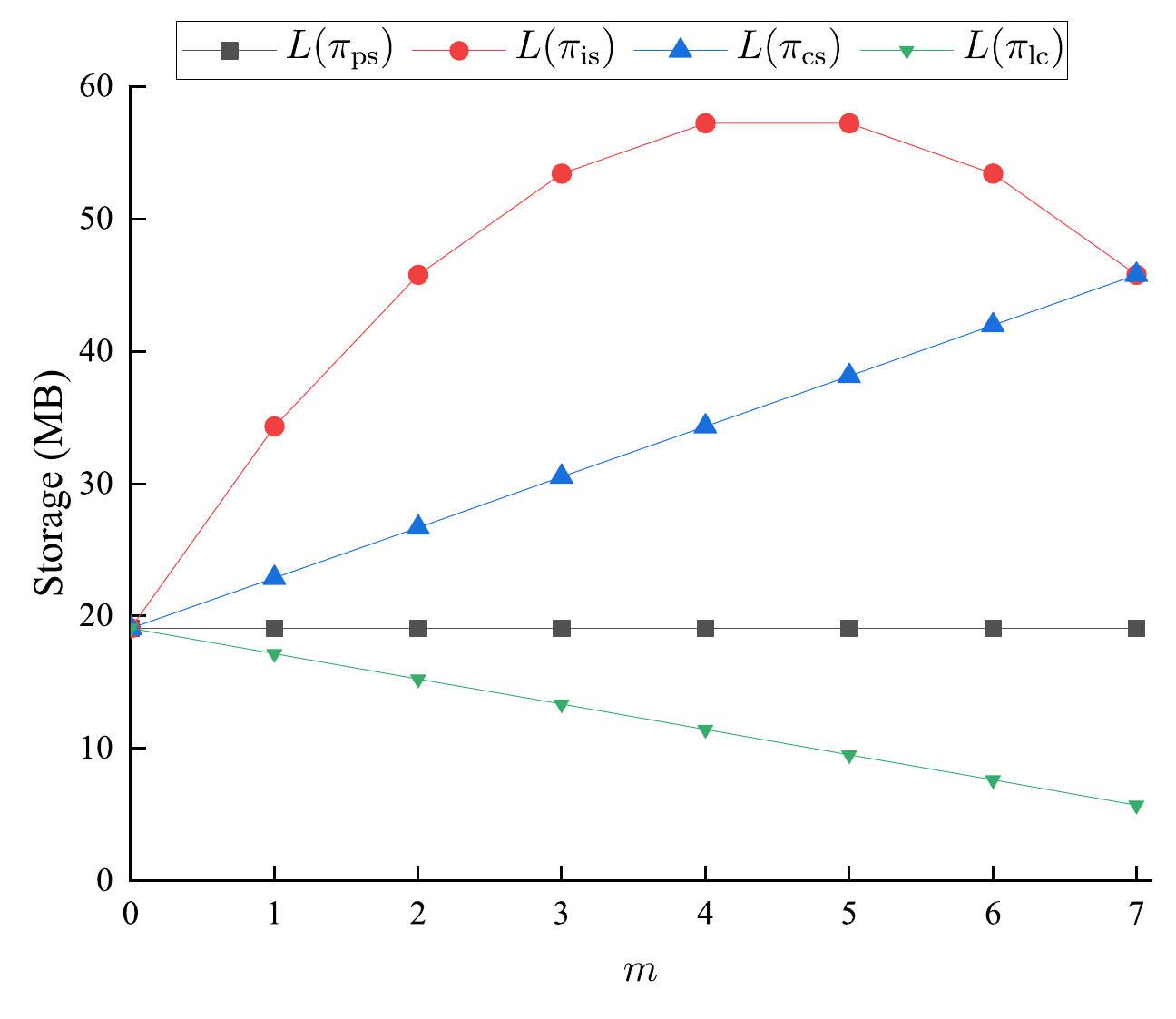}} \\
		\subfloat[Information rate]{\label{subb} \includegraphics[width=2.7 in]{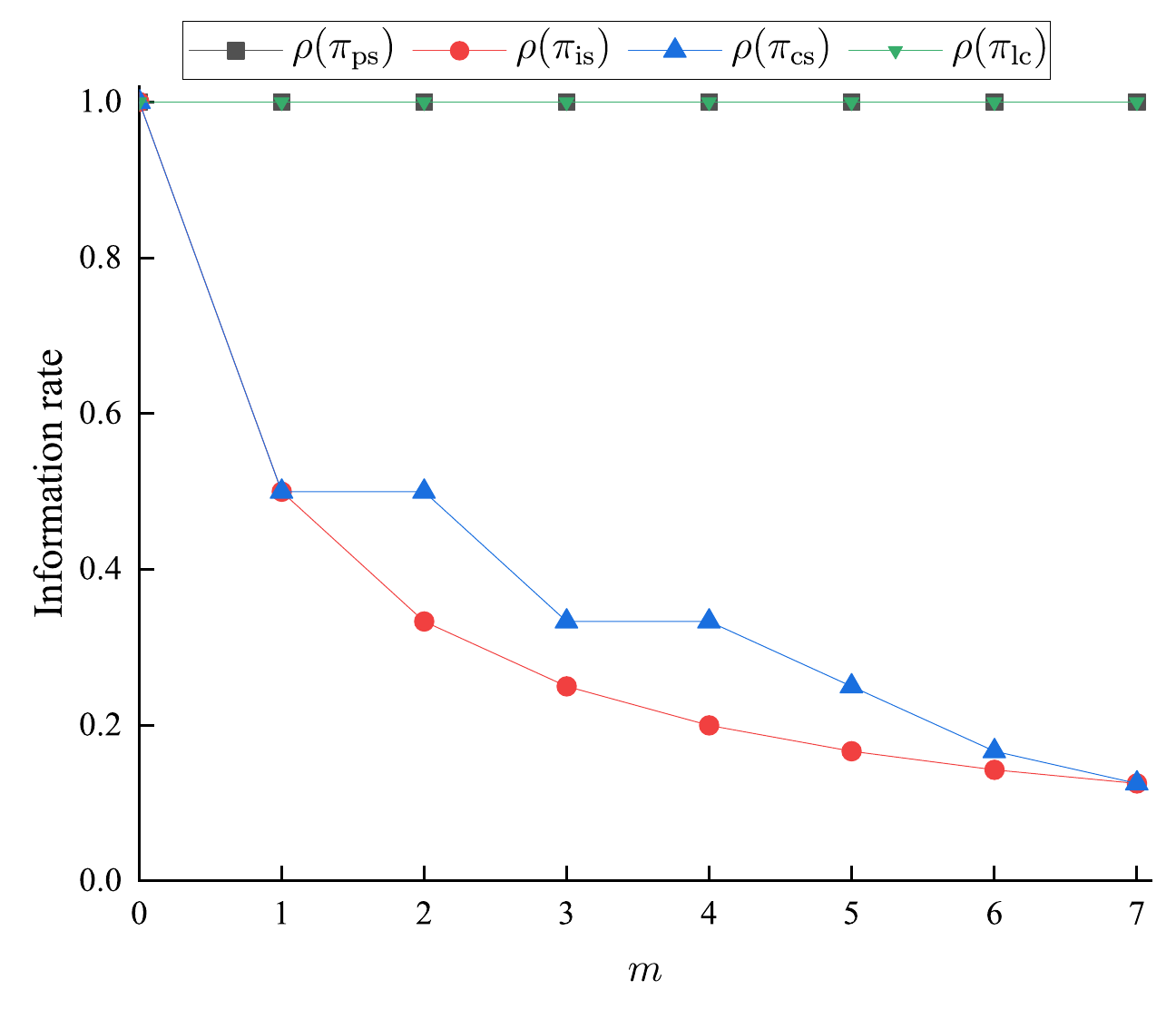}}
		\caption{The total {\bf storage} occupied by all shares in $\pi_{\rm ps},\pi_{\rm is},\pi_{\rm cs}$, our $\pi_{\rm lc}$ respectively and the {\bf information rate} of the four schemes, when $m$ servers are removed from $(8,10)$-threshold access structure. The storage occupied by every share is $16$ bit ($\ell=16$), and the number of secrets is $z=10^6$}
		\label{fig:t1}
	\end{figure}
	
Like \cite{LCH22,SAR18},  the users in our experiment are also allowed  to subscribe at most $10$ servers (i.e., $n=10$)  and we denote by $\mathcal P=\{P_1,\ldots,P_{10}\}$ the set of all servers. Let $p=2^{16}-15$ be a 16-bit prime. We set the threshold to be $t=8$ and construct Shamir's (8,10)-TSSS $M=(\mathbb{F}_p,\bm H,\bm t,\bm \psi)$ for the threshold access structure $\Gamma$ in the following way. We firstly choose $\bm H, \bm t$, and $\psi$ such that $\bm H=(\bm h_1,\ldots,\bm h_{10})^\top$, where $\bm h_i=(i^0,\ldots,i^{7})^\top$ for each $i\in [10]$, $\bm t=(1,0,\ldots,0)^\top\in \mathbb F_p^{8}$, and $\psi(i)=P_i$ for each $i\in[10]$. We randomly generate a list of {$10^6$} field elements as the secrets to be stored (i.e., the number of secrets is $z=10^6$). Let $Q\subseteq \mathcal P$ be a set of $m$ servers to be removed. The storage $\ell$ occupied by every share is $16$ bit.
	
	We have implemented the four methods given in Section \ref{sec:mcs} on a Dell OptiPlex 7050 Personal Computer that runs with an Intel Core i5-6500 (3.20GHz) processor and a RAM of 16 GB. We compare the four new schemes $\pi_{\rm ps},\pi_{\rm is},\pi_{\rm cs},\pi_{\rm lc}$ for $\Gamma_{\cdot Q}$ in terms of the total storage $L$ occupied by all shares (part (a) of Fig. \ref{fig:t1}) and the information rate (part (b) of Fig. \ref{fig:t1}) when the size $m$ of $Q$ increases from $0$ to $7$ with a step $1$.

	In part (a) of Fig. \ref{fig:t1}, when $n=10,t=8,\ell=16,z=10^6$, we have that
		$L(\pi_{\rm is})=-1.9073m^2+17.1661m+19.0735$, $L(\pi_{\rm cs})=3.8147m+19.0735$, $L(\pi_{\rm ps})=19.0735$, $L(\pi_{\rm lc})=-1.0973m+19.0735$.
	{Part (a) of Fig. \ref{fig:t1}} shows that our method occupies less storage than {others}, because our method can eliminate the storage occupied by {$Q$'s} shares. The more servers are removed, the more storage-effective our method is than {others}. 	In part (b) of Fig. \ref{fig:t1}, when $n=10,t=8$, we have that
	$
		\rho(\pi_{\rm lc})=\rho(\pi_{\rm ps})=1,\rho(\pi_{\rm cs})={\lceil{3m}/{(10-m)}+1\rceil}^{-1}, \rho(\pi_{\rm is})=({m+1})^{-1}
	$.
	Part (b) of Fig. \ref{fig:t1} shows that the information rate of $\pi_{\rm lc}$ and $\pi_{\rm ps}$ is higher than that of $\pi_{\rm is}$ and $\pi_{\rm cs}$.




	\section{Application in Single-cloud Storage}
	\label{sec:app}
	In an ABE scheme with an access structure $\Gamma$, the contraction of $\Gamma$ means reduction in the attribute requirements for decryption, so that more users will be allowed to access the encrypted data. In this section, we focus on contractions of access structures in CP-ABE schemes.
	
	\subsection{CP-ABE Model} \label{sec:CPABEmodel}
	A ciphertext-policy
attribute-based encryption (CP-ABE) scheme ${\sf (Setup, KeyGen, Encrypt, Decrypt)}$
consists
  of  four polynomial-time algorithms:
	    \begin{itemize}
	        \item ${\sf Setup}(\lambda,U)\rightarrow(\mathrm{PK},\mathrm{MSK}).$ The \emph{setup} algorithm takes a security parameter $\lambda$ and a universal set $U$ of attributes as input and outputs a public key $\mathrm{PK}$ and a master secret key $\mathrm{MSK}$.
	        \item ${\sf KeyGen}(\mathrm{PK},\mathrm{MSK},A)\rightarrow \mathrm{SK}.$ The
{\em key generation} algorithm takes the public key $\mathrm{PK}$, the master secret key $\mathrm{MSK}$, and a set $A$ of attributes as input and outputs a private key $\mathrm{SK}$ for $A$.
	        \item ${\sf Encrypt}(\mathrm{PK},M,\Gamma,\mathcal M)\rightarrow \mathrm{CT}.$ The \emph{encryption} algorithm takes the public key $\mathrm{PK}$, a message $M$, an access structure $\Gamma$ over $U$, and an LSSS $\mathcal M$ for $\Gamma$ as input, and outputs a ciphertext $\mathrm{CT}$
         such that  a user can  extract $M$ from $\mathrm{CT}$  if and only if its attributes form an authorized subset in $\Gamma$.
         It is assumed that $\mathrm{CT}$ implicitly includes $\Gamma$.
	        \item ${\sf Decrypt}(\mathrm{SK},\mathrm{CT})\rightarrow M.$ The \emph{decryption} algorithm takes as input a private key $\mathrm{SK}$ for a set $A$ of attributes and a ciphertext $\mathrm{CT}$, which includes an access structure $\Gamma$. If $A\in \Gamma$, it outputs a message $M$.
	    \end{itemize}

\begin{figure}[t]
		\centering
		\includegraphics[width=3.4in]{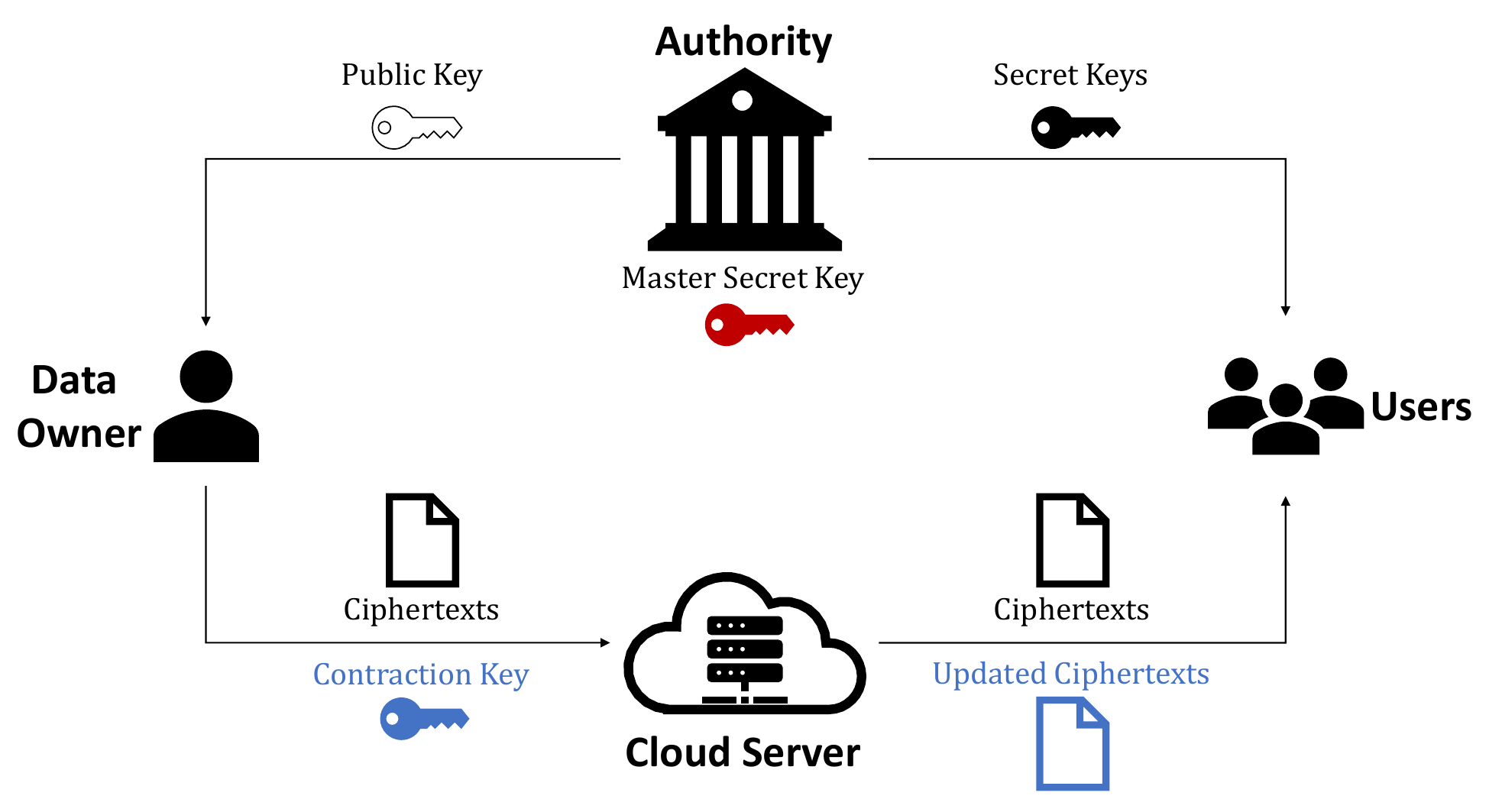}
		\caption{\bf CP-ABE (-CAS) Model}
		\label{fig:model}
	\end{figure}

    \noindent {\bf System architecture.} A CP-ABE scheme (depicted in Fig. \ref{fig:model}) involves four entities: {\em authority, data owner, server,} and {\em user}.
  The authority is  trusted and responsible to  run
     $\sf Setup$  to generate $(\rm PK,\mathrm{MSK})$  and run  $\mathsf{KeyGen}$   to
     generate a  private key $\rm SK$  for every   registered user.
By running  $\mathsf{Encrypt}$,  the {\em data owner}  may use $\mathrm{PK}$ to  encrypt its
data $M$ with an access policy $\Gamma$.
The server is  {\em honest-but-curious} and stores the resulting ciphertext ${\rm CT}$.
To learn $M$,  the user simply downloads $\rm CT$ from the server and runs $\mathsf{Decrypt}$.

\vspace{1mm}
	\noindent {\bf Security.} The security of CP-ABE schemes \cite{Wat11}  can be   defined with
   a security game $G_1$    between a challenger and an adversary and the game consists of the following phases:
	 \begin{itemize}
	     \item \textit{Setup:} The challenger runs the setup algorithm to generate
${\rm (PK,MSK)}$ and gives  $\mathrm{PK}$ to the adversary.
	     \item \textit{Phase 1:} The adversary queries the challenger for private keys corresponding to the attribute sets $S_1,\ldots,S_{q_1}$.
	     \item \textit{Challenge:} The adversary declares two equal length messages $M_0,M_1$ and an access structure $\Gamma^*$ such that none of the queried attribute sets $S_1,\ldots,S_{q_1}$ satisfies $\Gamma^*$. The challenger  chooses $b\in\{0,1\}$ randomly,  encrypts $M_b$ under $\Gamma^*$, and  gives the ciphertext $\mathrm{CT}^*$ the adversary.
	     \item \textit{Phase 2:} The adversary queries the challenger for private keys
corresponding to the  attribute sets $S_{q_1+1},\ldots,S_q$, with the restriction that none of these satisfies $\Gamma^*$.
	     \item \textit{Guess:} The adversary outputs a guess $b'$ for $b$.
	 \end{itemize}
	 The \emph{advantage} of the adversary in $G_1$ is defined as
 $\mathrm{Pr}[b'=b]-1/2$. A CP-ABE scheme is {\em secure} if all PPT
 adversaries have at most a negligible advantage in $G_1$.

	\subsection{Waters' CP-ABE Scheme} \label{sec:watscheme}
 Waters  \cite{Wat11} constructed a CP-ABE scheme that is secure under the
 decisional parallel bilinear diffie-hellman exponent assumption for bilinear groups
 (see Appendix \ref{app:bp} for definition of bilinear groups). Their scheme can be detailed   as follows: 	
	\begin{itemize}
	        \item ${\sf Setup}(\lambda,U)$. Choose  a bilinear group $\mathbb G=\langle  g\rangle$ of prime order $p$  and a bilinear map $e:\mathbb G\times \mathbb G\rightarrow \mathbb G_T$.
Choose random exponents $\beta,a\in\mathbb{Z}_p$. For every attribute $x\in U$, choose a random value $
	        T_x\in \mathbb G$. Output
$$\mathrm{MSK}=g^\beta,\mathrm{PK}=\{g,g^a,e(g,g)^\beta,(T_x)_{x\in U}\}.$$
	        \item ${\sf KeyGen}(\mathrm{PK},\mathrm{MSK},A)$.
 Choose  a random $t\in \mathbb Z_p$. Compute
	        $K=g^\beta g^{at},L=g^t,K_x=T_x^t
	        $
	        for all $x\in A$. Output
$$\mathrm{SK}=\{A,K,L,(K_x)_{x\in A}\}.$$
	
	        \item ${\sf Encrypt}(\mathrm{PK},M,\Gamma,\mathcal M)$. Parse $\mathcal M$ as $(\mathbb Z_p,\bm H,\bm t,\psi)$, an LSSS for access structure $\Gamma$, where $\bm H=(\bm h_1,\ldots,\bm h_\ell)^\top$ is an $\ell\times d$ matrix over $\mathbb Z_p$, $\bm t=(1,0,\ldots,0)^\top\in \mathbb Z_p^d$ is a target vector, $\psi:[\ell]\rightarrow U$ is a map from each row $\bm h_i^\top$   to an attribute $\psi(i)$. Choose  a random vector $\bm v=(s,v_2,\ldots,v_d)\in \mathbb Z_p^d$. For each $i\in[\ell]$, compute $s_i=\bm h_i^\top \bm v$ and  chooses a random value $r_i\in \mathbb Z_p$.
Let 	            $C=Me(g,g)^{\beta s},C'=g^s,C_i=g^{as_i}T_{\psi(i)}^{-r_i},D_i=g^{r_i}$
	        for all $i\in [\ell]$. Finally, output
$$\mathrm{CT}=\{\mathcal M,C,C',(C_i,D_i)_{i\in [\ell]}\}.$$
	
	        \item ${\sf Decrypt}(\mathrm{SK},\mathrm{CT})$. Suppose $A\in \Gamma$. Compute
   the constants $\{\alpha_i\in \mathbb Z_p:\psi(j)\in A\}$ such that $\sum_{\psi(j)\in A}\alpha_j\bm h_j^\top=\bm t$. Compute
	        $$\frac{e(C',K)}{\prod_{\psi(j)\in A}\left(e(C_j,L)e(D_j,K_{\psi(j)})\right)^{\alpha_j}}=e(g,g)^{\beta s}.$$
	        Finally, output $M=C/e(g,g)^{\beta s}$.

	    \end{itemize}
For every  $i\in [\ell]$, the component $C_i$ of the ciphertext $\mathrm{CT}$ is associated with
 an  attribute $\psi(i)$ and provides necessary \emph{decrypting information} to an authorized
attribute set $A$ that contains  $\psi(i)$.
As a result,  the problem of  eliminating the control of $\psi(i)$ (in general, an unauthorized
subset of attributes)   over decryption is reduced to the problem (p1).

\subsection{CP-ABE with Contractions of Access Structure}

To enable   contractions of  access structures, we extend the   CP-ABE model
of Section \ref{sec:CPABEmodel} to
a new model of  {\em CP-ABE with contractions of access structures  (CP-ABE-CAS)}.
The new model   ${\sf (Setup, KeyGen, Encrypt^*, Decrypt, Contract)}$
is obtained from that of CP-ABE by making two changes: (1) enhancing the
$\sf Encrypt$ in CP-ABE to a new encryption algorithm ${\sf Encrypt}^*$ that also outputs a contraction
key $\rm CK$; (2)  adding a new contraction algorithm $\sf Contract$ that allows one to use
 $\rm CK$ (or its restrictions) to contract the access structure associated with a ciphertext.
More precisely,  the new algorithms ${\sf Encrypt}^*$ and
$\sf Contract$  can be detailed as below:
\begin{itemize}
            \item ${\sf Encrypt}^*(\mathrm{PK},M,\Gamma,\mathcal M)\rightarrow
            (\mathrm{CT},\mathrm{CK}).$ The {\em modified encryption}  algorithm
             additionally outputs a contraction key $\mathrm{CK}$, which
               implicitly includes $U$. For any $Q\subseteq U$, $\mathrm{CK}$ can be restricted to $\mathrm{CK}_Q$.
	        \item  ${\sf Contract}(\mathrm{PK},\mathrm{CT},Q,\mathrm{CK}_Q)
\rightarrow \mathrm{CT}_{\cdot Q}.$ The {\em contraction} algorithm takes
the public key $\mathrm{PK}$, a ciphertext $\mathrm{CT}$ with an
access structure $\Gamma$, a set $Q\in 2^U\setminus \Gamma$, and a contraction
 key $\mathrm{CK}_Q$ restricted to $Q$ as input and outputs
 a new ciphertext $\mathrm{CT}_{\cdot Q}$ that can be decrypted by any authorized
  attribute set in $\Gamma_{\cdot Q}$. Similarly, $\mathrm{CT}_{\cdot Q}$ includes $\Gamma_{\cdot Q}$.
	    \end{itemize}

      \noindent {\bf System architecture.}
Referring to Fig. \ref{fig:model}, to enable contractions  of access structures with CP-ABE-CAS,
a data owner may run
${\sf Encrypt}^*(\mathrm{PK},M,\Gamma,\mathcal M)$ to
produce ${(\rm CT, CK)}$, store $\rm CT$ on the
 cloud server and keep  $\rm CK$ secret in local storage such that it is the only one that can later
   request the server to contract
       access structures.
To  remove an attribute $y$ ($Q=\{y\}\in 2^U\setminus \Gamma$), the data owner may send
a   restricted contraction key  $\mathrm{CK}_Q$  to the
       server and let the server   execute
       ${\sf Contract}(\mathrm{PK},\mathrm{CT},Q,\mathrm{CK}_Q)$.
Afterwards, all users with an authorized attribute set in
         $\Gamma_{\cdot Q}$ will be able to decrypt the contracted ciphertext
$\mathrm{CT}_{\cdot Q}$.

\vspace{1mm}
  \noindent {\bf Security.}
We define the security of CP-ABE-CAS  with
   a security game $G_2$ that
     consists of the following phases:
	    \begin{itemize}
  \item \textit{Setup:} The challenger runs the setup algorithm to generate
${\rm (PK,MSK)}$ and gives  $\mathrm{PK}$ to the adversary.
		     \item \textit{Phase 1:} The adversary queries the challenger for private keys corresponding to the attribute sets $S_1,\ldots,S_{q_1}$.
	        \item \textit{Challenge:} The adversary declares two equal length
messages $M_0,M_1$, an access structure $\Gamma^*$, an LSSS $\mathcal M$ for $\Gamma^*$,
 and an unauthorized subset $Q$ of $\Gamma^*$, where $\Gamma^*_{\cdot Q}$ cannot be satisfied by any
  of  $S_1,\ldots,S_{q_1}$. The challenger chooses $b\in\{0,1\}$ randomly,
  encrypts $M_b$ under $\Gamma^*$, producing $\mathrm{CT}^*$,
  runs  ${\sf Contract}(\mathrm{PK},\mathrm{CT}^*,Q,\mathrm{CK}_Q)$ to
   output $\mathrm{CT}^*_{\cdot Q}$, and  gives $(\mathrm{CT}^*,\mathrm{CT}^*_{\cdot Q},
   \mathrm{CK}_Q)$ to the adversary.
	     \item \textit{Phase 2:} The adversary queries the challenger for private
 keys corresponding to the  attribute sets $S_{q_1+1},\ldots,S_q$, with the
  restriction that none of these satisfies $\Gamma^*_{\cdot Q}$.
	     \item \textit{Guess:} The adversary outputs a guess $b'$ for $b$.
	    \end{itemize}
	    The {\em advantage} of an adversary in $G_2$  is $\mathrm{Pr}[b'=b]-1/2$.
A CP-ABE-CAS is {\em secure} if all PPT adversaries have at most a negligible
advantage in $G_2$.

	    \subsection{Our CP-ABE-CAS scheme} \label{sec:cpabecas_scheme}

     In this section, we upgrade the CP-ABE scheme of Waters \cite{Wat11}
     (see Section \ref{sec:watscheme})
     to a CP-ABE-CAS scheme with specific constructions of
        ${\sf Encrypt}^*$ and   $\sf Contract$.
        Our contraction algorithm will be constructed based on the  transformations from
        Section \ref{sec:our transformations}. In particular, we will consider
        contractions of ideal access structures at an unauthorized attribute set
         $Q$ of cardinality 1, because the algorithm can be easily extended to the case   $|Q|>1$.
The details of  ${\sf Encrypt}^*$ and ${\sf Contract}$ are described as follows:
   \begin{itemize}
            \item ${\sf Encrypt}^*(\mathrm{PK},M,\Gamma,\mathcal M).$
Execute the encryption algorithm  ${\sf Encrypt}(\mathrm{PK},M,\Gamma,\mathcal M)$.
Output the contraction key
 $\mathrm{CK}=\{r_i: i\in[\ell]\}.$  For any $Q\subseteq U$,
let $\mathrm{CK}_Q=\{r_i: i\in \psi^{-1}(Q)\}$ be the restricted contraction key.
	        \item ${\sf Contract}(\mathrm{PK},\mathrm{CT},Q,\mathrm{CK}_Q).$ Suppose
that $Q=\{y\}\in 2^U\setminus \Gamma$. W.l.o.g, assume that $\psi^{-1}(y)=\ell$.
Invoke  Algorithm \ref{alg:s1} with $(\mathcal M,Q)$ as input to
generate an MSP $\mathcal M'$ for $\Gamma_{\cdot Q}$.
Let $k$ be the integer chosen at step 1 of Algorithm \ref{alg:s1}.
For all $i\in[\ell-1]$, compute        $$C'_i=
	        C_i(C_{\psi^{-1}(y)}T_y^{r_{\psi^{-1}(y)}})^{-\frac{h_{ik}}{h_{\ell k}}}.$$
Finally, output
	        $\mathrm{CT}_{\cdot Q}=\{\mathcal M',C,C',(C'_i,D_i)_{i\in [\ell-1]}\}$.
	    \end{itemize}

Our contraction algorithm   properly integrates the decrypting information associated with $Q$ into
that   associated with every remaining attribute.
As the new ciphertext ${\rm CT}_{\cdot Q}$ is \emph{shorter} than the original ciphertext
$\rm CT$,  hereafter we denote  by  $\mathsf{Contract}_{\rm sct}$ this contraction algorithm.

\vspace{1mm}
\noindent
{\bf Correctness and security.}
The correctness of our  scheme  follows from that
of  Waters' scheme (Section \ref{sec:watscheme}) and  Theorem \ref{1}.
The security of our  scheme is an easy extension to that of  Waters' scheme and appears in
 Appendix \ref{appendix:CPABEsecur}.

\vspace{1mm}
			\noindent {\bf Client-side storage.}
In our CP-ABE-CAS,
 the  client needs to   store a contraction key $\mathrm{CK}$ whose length is $\ell$ times that
 of a single message, where $\ell$ is the total number of attributes.
 When $|\rm CK|$ occupies more storage than the outsourced data,
   storing data with a cloud server will become meaningless.
   This concern can be easily relieved by the data owner choosing a
   PRF $F_\gamma:\{0,1\}^*\rightarrow \mathbb{Z}_p$
   and generating every element $r_i$ in $\rm CK$
   as $r_i=F_\gamma(i)$. The data owner only needs to keep
   the secret key $\gamma$ of the PRF as a
   {\em long-term} contraction key.


 \vspace{1mm}
   \noindent {\bf Remark.}
		If we apply Martin's method or Nikov-Nikova method from Section \ref{sec:mcs} to
construct  CP-ABE-CAS, then intuitively $\mathsf{Contract}$ may realized by  {\em appending
  the restricted contraction key to the original ciphertext}. That is, $\mathsf{Contract}$
   simply outputs $\mathrm{CT}_{\cdot Q}=\{\mathcal M,C,C',(C_i,D_i)_{i\in [\ell]},\mathrm{CK}_Q\}$.
   Then an authorized attribute set  in $\Gamma_{\cdot Q}$ can leverage $\mathrm{CK}_Q$ to compute
		$e(C_j,L)e(D_j,K_{\psi(j)})=e(g,g)^{a s_j t}=e(C_jT_{\psi(j)}^{r_j},L)$
	for all $j\in\psi^{-1}(Q)$ and then recover the message $M$.
This straightforward contraction algorithm via \emph{ciphertext extension}
is referred to as  $\mathsf{Contract}_{\rm ect}$.

	\subsection{Performance Analysis}
	In this section, in order to compare the trivial solution (referred to as $\mathsf{Contract}_{\rm re}$) mentioned in Section \ref{ss:apps}, $\mathsf{Contract}_{\rm ect}$ and $\mathsf{Contract}_{\rm sct}$ in Section \ref{sec:cpabecas_scheme}, we have implemented all algorithms on a Dell PowerEdge T640 Server that runs with an Intel Xeon Gold 5218 (2.30GHz) processor and a RAM of 16GB. We have used the Paring-Based Cryptography (PBC) library and the fast library for number theory (FLINT) to implement the algorithm. We use the type A elliptic curve $y^2 = x^3 + x$ and the order of the bilinear group is a 160-bit prime $p=2^{159}+162259276829213363391578010288129$.
	
	We compare three solutions in terms of the following measures: during the contraction process, the communication and computation cost for the contraction operation; after contraction, the storage cost for the contracted ciphertext, the communication cost for downloading ciphertext, the computation cost for the user decryption. All the following experiments are for a  {\em single} message. We refer to the parameter settings of \cite{GSB+21}.
	For the contraction process, the main factor of the difference between the three solutions is the number $n$ of attributes in the universe set $U$. We use the threshold access structure to encrypt the message, set the threshold to be $t=8$ and set $n=|U|$ to be from $10$ to $100$ with a step $10$.
	For the performance of the contracted ciphertext, the main factor of the difference between the three solutions is the number $m$ of the attributes to be removed. We use a $(t,n)$-threshold structure where $t=8,n=10$ and choose the unauthorized subset $Q$ of attributes to be removed such that $m=|Q|$ is from $0$ to $7$ with a step $1$.
	Furthermore, to compute the execution time of the decryption for the original ciphertext and the contracted ciphertext, we simply allow the client to possess a set $A$ of attributes such that $|A|=t$ and introduce a user whose attribute set is $B$ such that $|B|=t-m$. To make the execution time more accurate, we execute each experiment 1000 times to compute an average time.

	\subsubsection{Evaluation of the Contraction Process}

 \begin{figure}[!t]
	\centering
    \includegraphics[width=2.7 in]{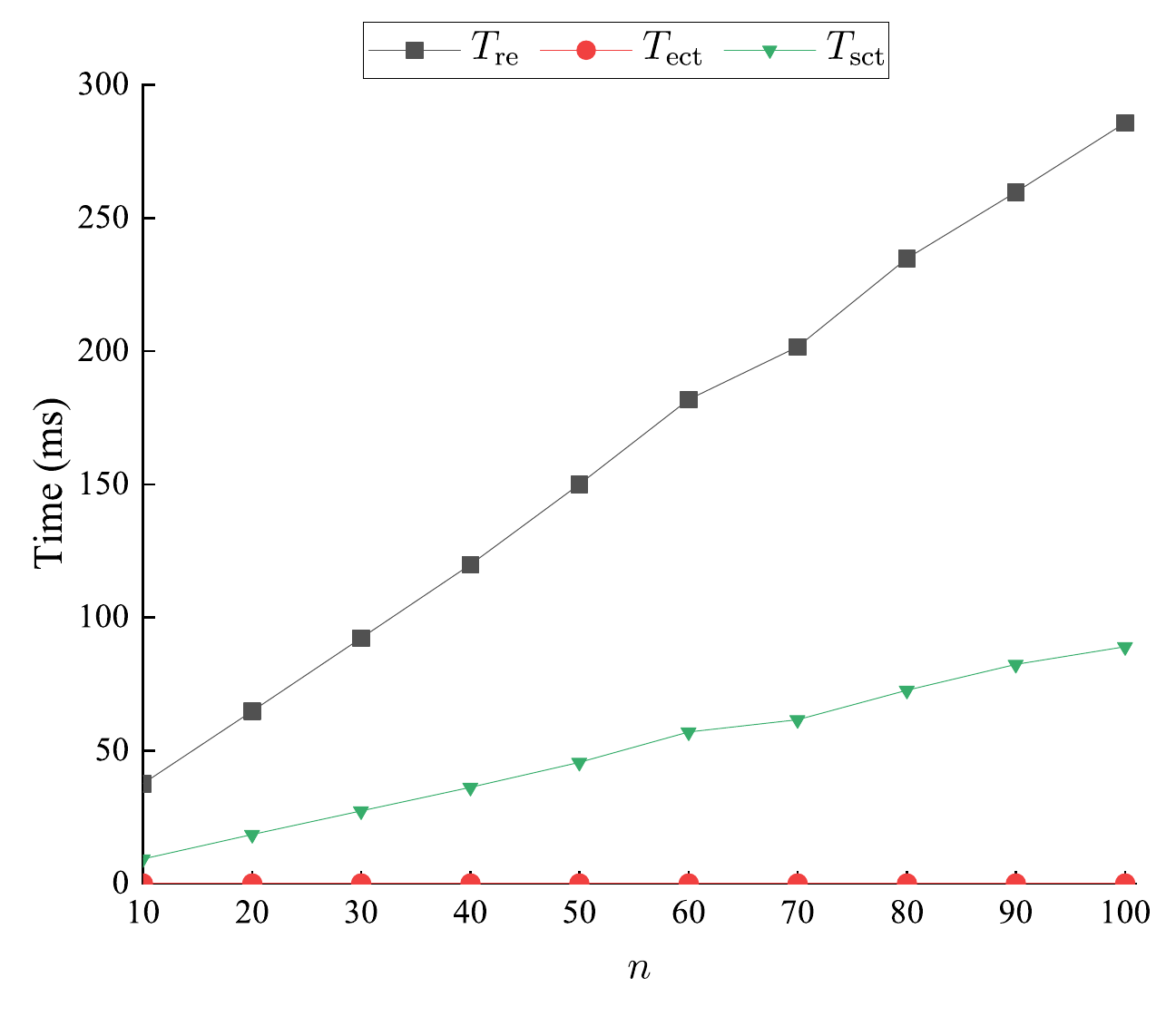}
    \caption{{\bf Computation of contraction.} The execution time (resp. $T_{\rm re}$, $T_{\rm ect}$, and $T_{\rm sct}$) of $\mathsf{Contract}_{\rm re}$, $\mathsf{Contract}_{\rm ect}$ and our $\mathsf{Contract}_{\rm sct}$ when $1$ attribute is removed from $(8,n)$-threshold access structure (where $n=10,20,\ldots,100$)}
    \label{fig:abeeff}
    \end{figure}

\begin{figure}[!t]
		\centering
		\subfloat[Server-side communication]{\includegraphics[width=2.7 in]{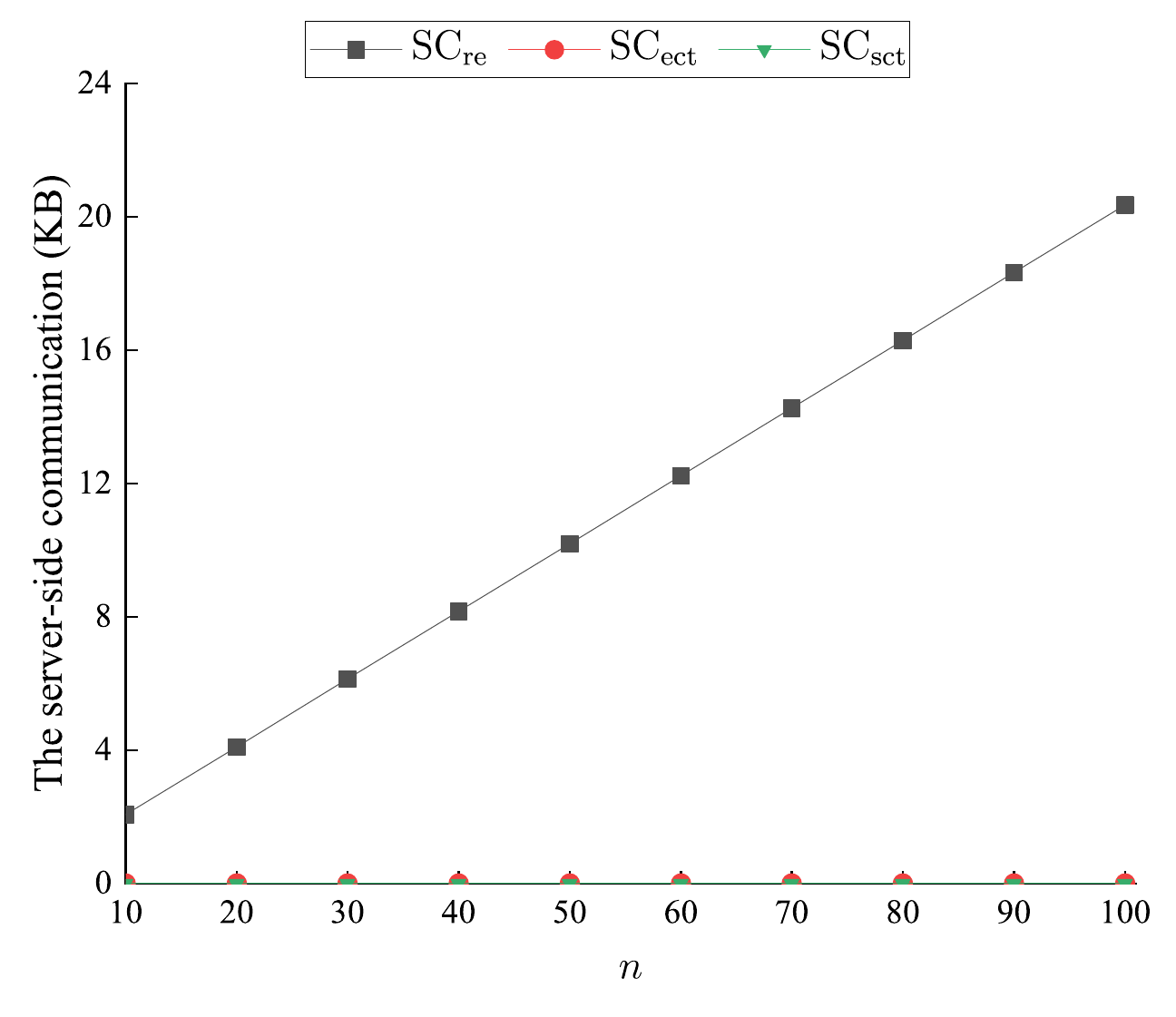}} \\
		\subfloat[Client-side communication]{\includegraphics[width=2.7 in]{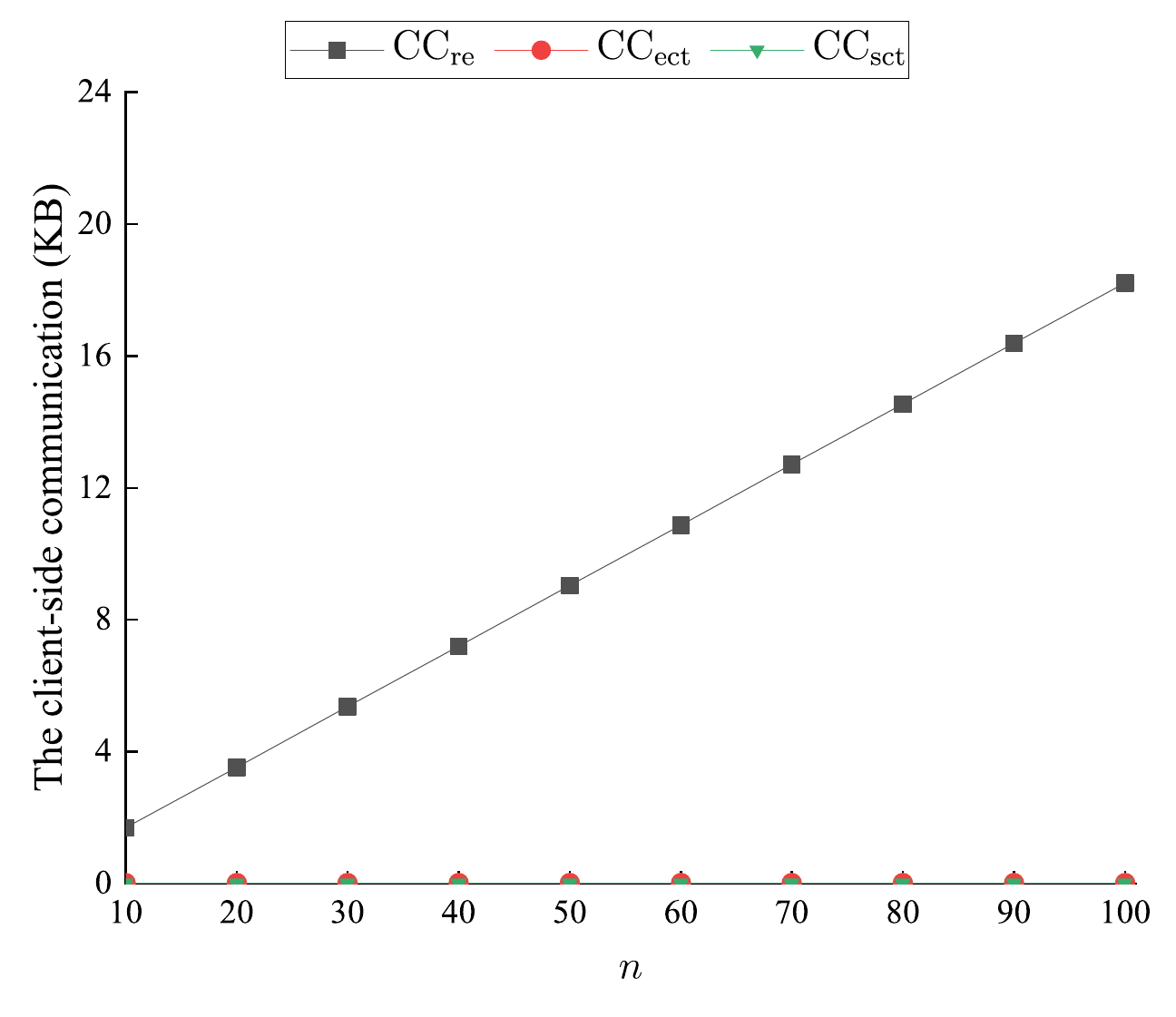}}
		\caption{{\bf Communication of contraction.} The server-side (resp. ${\rm SC}_{\rm re}$, ${\rm SC}_{\rm ect}$, and ${\rm SC}_{\rm sct}$) and client-side communication (resp. ${\rm CC}_{\rm re}$, ${\rm CC}_{\rm ect}$, and ${\rm CC}_{\rm sct}$) of $\mathsf{Contract}_{\rm re}$, $\mathsf{Contract}_{\rm ect}$ and our $\mathsf{Contract}_{\rm sct}$ when $1$ attribute is removed from $(8,n)$-threshold access structure (where $n=10,20,\ldots,100$)}
		\label{fig:communication}
	\end{figure}
\

	\noindent \textbf{Communication cost.}
	In the trivial solution $\mathsf{Contract}_{\rm re}$, to realize the contraction of the access structure, provided that the client also has access to its data on the server and its attribute set is $A$, it should download and decrypt the ciphertext from the server and then upload new ciphertext after and re-encrypting.
	While in $\mathsf{Contract}_{\rm sct}$ and $\mathsf{Contract}_{\rm ect}$, it only requires the client to send the contraction key $\mathrm{CK}_Q$. Theoretically, regardless of client-side or server-side communication, $\mathsf{Contract}_{\rm sct}$ and $\mathsf{Contract}_{\rm ect}$ are superior to $\mathsf{Contract}_{\rm re}$. This is confirmed by the following results.
	 Fig.\ \ref{fig:communication} shows the server-side and client-side communication of $\mathsf{Contract}_{\rm re}$ (${\rm SC}_{\rm re}$ and ${\rm CC}_{\rm re}$), $\mathsf{Contract}_{\rm ect}$ (${\rm SC}_{\rm ect}$ and ${\rm CC}_{\rm ect}$) and $\mathsf{Contract}_{\rm sct}$ (${\rm SC}_{\rm sct}$ and ${\rm CC}_{\rm sct}$) when the threshold of access structure is $t=8$ and the number of the attributes to be removed is $m=1$. In particular,
		${\rm CC}_{\rm re}=0.1836n-0.1445,
		{\rm CC}_{\rm ect}={\rm CC}_{\rm sct}=0.0195, {\rm SC}_{\rm re}=0.2031n+0.0391, {\rm SC}_{\rm ect}={\rm SC}_{\rm sct}=0$.
	From the results, it follows that whether on the client side or on the server side, the communication cost of $\mathsf{Contract}_{\rm sct}$ and $\mathsf{Contract}_{\rm ect}$ is much lower than $\mathsf{Contract}_{\rm re}$.



	\vspace{1mm}
	\noindent \textbf{Computation cost.}
	Here we compare the execution time (resp. $T_{\rm re}$, $T_{\rm ect}$, and $T_{\rm sct}$) of three algorithms $\mathsf{Contract}_{\rm re}$, $\mathsf{Contract}_{\rm ect}$, and $\mathsf{Contract}_{\rm sct}$. Note that we ignore the time consumed by the communication between the server and the client, thus $\mathsf{Contract}_{\rm ect}$ takes negligible time in the contraction process.
	As shown in Fig. \ref{fig:abeeff}, when the threshold and the number of attributes to be removed are fixed, the execution time of $\mathsf{Contract}_{\rm re}$ and $\mathsf{Contract}_{\rm sct}$ is almost linear with the number $n$ of attributes. In particular, when $t=5,m=1$, we can obtain by linear fitting that
		$T_{\rm re}=2.7802n+9.8260,T_{\rm sct}=0.8964n+0.5536,T_{\rm ect}=0$.
	Theoretically, as the number of attributes grows, more pairs $(C_i,D_i)_{i\in[|U|-1]}$ in the encryption algorithm and more values $\{C_i':i\in[|U|-1]\}$ in the contraction algorithm need to be calculated. The results confirm it. Precisely, the execution time of $\mathsf{Contract}_{\rm re}$ is roughly $3$ times that of $\mathsf{Contract}_{\rm sct}$. This shows that $\mathsf{Contract}_{\rm sct}$ is more efficient than $\mathsf{Contract}_{\rm re}$.

	\subsubsection{Evaluation of the Contracted Ciphertext}
\

	\noindent \textbf{Storage cost.}
	We compare the length (resp. $l_{\rm re}$, $l_{\rm ect}$, and $l_{\rm sct}$)  of the contracted ciphertext obtained by three algorithms $\mathsf{Contract}_{\rm re}$, $\mathsf{Contract}_{\rm ect}$, and $\mathsf{Contract}_{\rm sct}$ in Fig. \ref{fig:abect}. The length $l_{\rm ect}$ of the contracted ciphertext generated by $\mathsf{Contract}_{\rm ect}$ is positively correlated with the size $m$ of $Q$, while the length $l_{\rm sct}$ (and $l_{\rm re}$) of the ciphertext generated by $\mathsf{Contract}_{\rm sct}$ (and $l_{\rm re}$) is negatively correlated with $m$. In particular, when $n=10,t=8$, we have that
		$l_{\rm ect}=0.0195 m+2.0703,
		l_{\rm re}=l_{\rm sct}=0.0195 m^2-0.3984 m+2.0703$.
	Theoretically, as the size of $Q$ grows, the length of the contraction key $\mathrm{CK}_{\cdot Q}$ included in the contracted ciphertext generated by $\mathsf{Contract}_{\rm ect}$ increases, while in the contracted ciphertext generated by $\mathsf{Contract}_{\rm sct}$ (and $\mathsf{Contract}_{\rm re}$), the number of the pairs $(C_i',D_i)_{i\in [|U|-|Q|]}$, the size of the matrix $\bm H$ and the map $\psi$ in the LSSS $\mathcal M$ all decrease. Thus, $\mathsf{Contract}_{\rm sct}$ (and $\mathsf{Contract}_{\rm re}$) is more cost-efficient than $\mathsf{Contract}_{\rm ect}$ in terms of storage.
	
	\begin{figure}[t]
		\centering
		\includegraphics[width=2.7in]{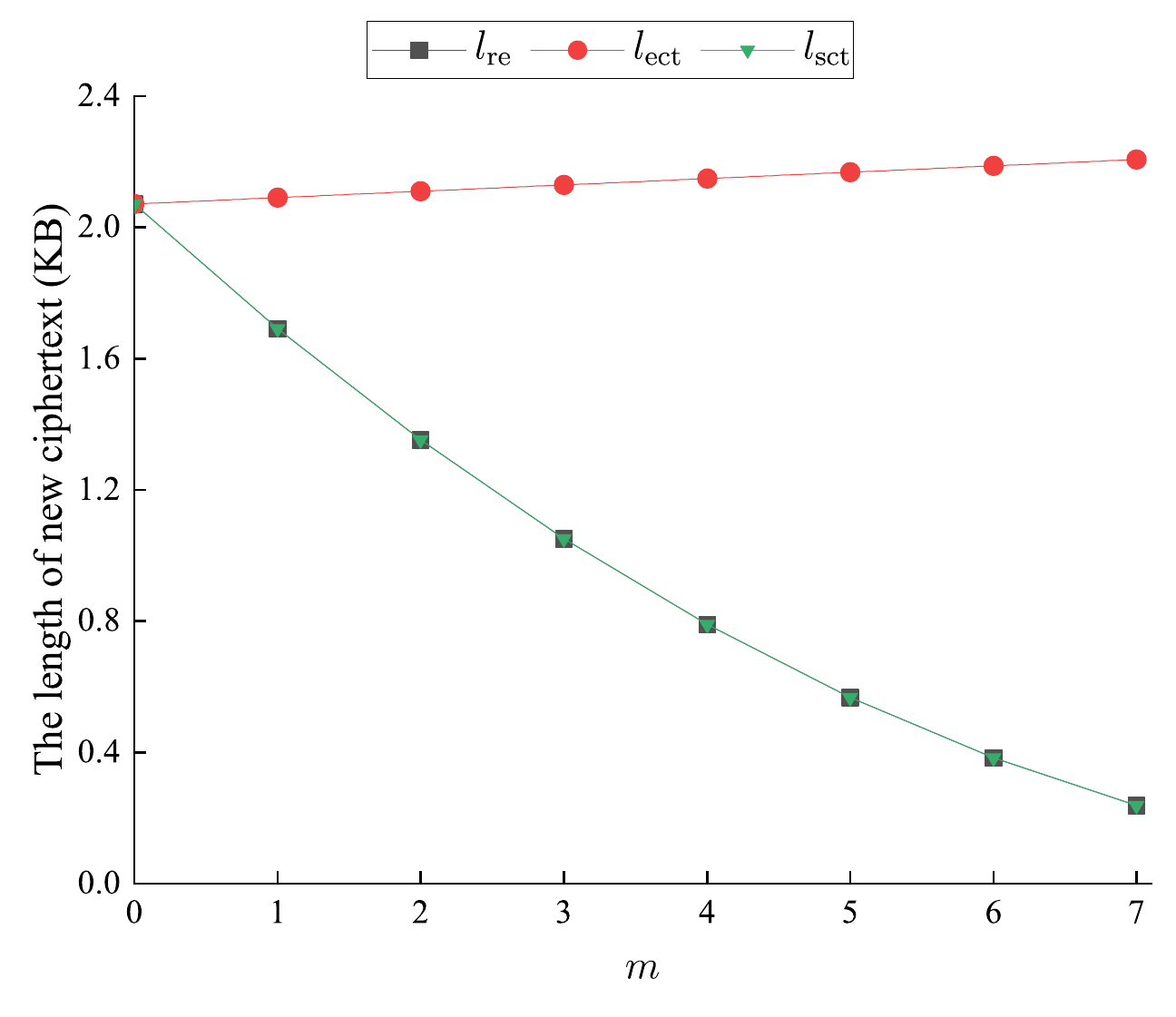}
		\caption{{\bf Storage of contracted ciphertext (implies communication of downloading contracted ciphertext).} The length (resp. $l_{\rm re}$, $l_{\rm ect}$, and $l_{\rm sct}$) of the contracted ciphertext obtained by $\mathsf{Contract}_{\rm re}$, $\mathsf{Contract}_{\rm ect}$, and our $\mathsf{Contract}_{\rm sct}$ after $m$ attributes are removed from $(8,10)$-threshold access structure (where $m=1,2,\ldots,7$)}
		\label{fig:abect}
	\end{figure}

	\vspace{1mm}
	\noindent \textbf{Communication cost.}
	During the user decryption process, the server sends the ciphertext to the user, and the server-side communication is exactly the length of the ciphertext, so the result is same as storage cost. Thus, $\mathsf{Contract}_{\rm sct}$ and $\mathsf{Contract}_{\rm re}$ are more efficient in terms of communication for the decryption of the contracted ciphertext than $\mathsf{Contract}_{\rm ect}$.

 \begin{figure}[!t]
		\centering
		\includegraphics[width=2.7in]{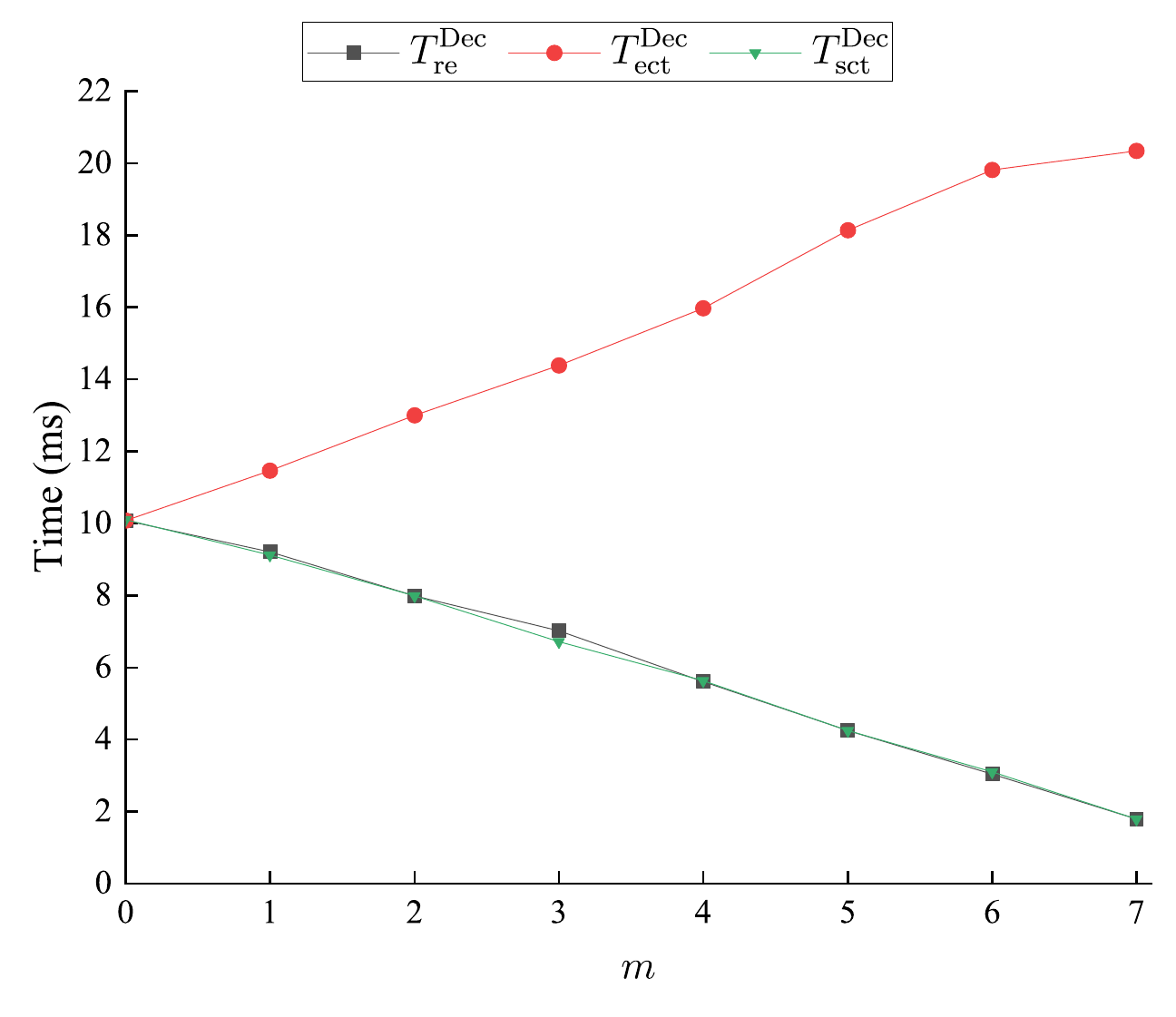}
		\caption{{\bf Computation of decrypting contracted ciphertext.} The execution time (resp. $T_{\rm re}^{\rm Dec}$, $T_{\rm ect}^{\rm Dec}$, and $T_{\rm sct}^{\rm Dec}$) of decryption for the contracted ciphertext obtained by $\mathsf{Contract}_{\rm re}$, $\mathsf{Contract}_{\rm ect}$ and $\mathsf{Contract}_{\rm sct}$ after $m$ attributes are removed from $(8,10)$-threshold access structure (where $m=1,2,\ldots,7$)}
	\label{fig:dectime}
	\end{figure}

	\vspace{1mm}
	\noindent \textbf{Computation cost.}
	As shown in Fig. \ref{fig:dectime}, when the universe set $U$ of attributes is set such that $n=|U|=10$ and the threshold is set to be $t=8$, the execution time $T_{\rm ect}^{\rm Dec}$ of the decryption for the contracted ciphertext obtained by $\mathsf{Contract}_{\rm ect}$ is positively correlated with the number $m$ of attributes to be removed, while the execution time $T_{\rm sct}^{\rm Dec}$ and $T_{\rm re}^{\rm Dec}$ of the decryption for the contracted ciphertext obtained by $\mathsf{Contract}_{\rm sct}$ and $\mathsf{Contract}_{\rm re}$ are negatively correlated with the number $m$ of attributes to be removed.
	In particular, when $n=10,t=8$, we can obtain by linear fitting that
		$T_{\rm ect}^{\rm Dec}=1.5540m+9.9558,
		T_{\rm re}^{\rm Dec}=-1.2073m+10.3447,
		T_{\rm sct}^{\rm Dec}=-1.1978m+10.2759$.
	Thus, the computation cost of the user-side decryption for the contracted ciphertext obtained by $\mathsf{Contract}_{\rm sct}$ and $\mathsf{Contract}_{\rm re}$ is lower than the computation cost of the user-side decryption for the contracted ciphertext obtained by $\mathsf{Contract}_{\rm ect}$.

	\subsubsection{Advantages of Our Contraction Algorithm}
	Consider that in many practical scenarios the number of users may be quite large, our contraction algorithm $\mathsf{Contract}_{\rm sct}$ allows the server to update the ciphertext only once when removing attributes, then the contracted ciphertext may be downloaded and decrypted by a large number of users. For example, when the number of attributes in $U$ is $n=10$, the threshold is $t=8$, the number of attributes to be removed is $m=7$, the execution time of $\mathsf{Contract}_{\rm sct}$ is roughly $10$ ms. For every user, the execution time $T_{\rm sct}^{\rm Dec}$ of the decryption for the contracted ciphertext obtained by $\mathsf{Contract}_{\rm sct}$ is roughly $2$ ms, while the execution time $T_{\rm ect}^{\rm Dec}$ of the decryption for the contracted ciphertext obtained by $\mathsf{Contract}_{\rm ect}$ is roughly $20$ ms. Note that the time of updating the ciphertext is only about $10$ ms, so when the number of users is quite large, it is clear that our algorithm is superior to the algorithm with ciphertext extension in terms of chronic computation cost.
	Furthermore, the length of the contracted ciphertext obtained by $\mathsf{Contract}_{\rm sct}$ is less than that obtained by $\mathsf{Contract}_{\rm ect}$. This not only saves the server-side storage, but also saves the server-side communication cost for users to download the ciphertext.
	In conclusion, our algorithm takes a certain amount of time to update the ciphertext, but optimizes the server-side storage and the overall communication and the user-side computation when a large number of users request new ciphertext.

	\section{Concluding Remarks}
	\label{sec:concluding remarks}
	In this paper, we  proposed algorithms that can efficiently
 transform  a given LSSS for an access structure  to
 LSSSs  for contractions of  the access structure.
 We also show their applications in solving the data relocating problem in
 multi-cloud storage and the   attribute removal problem
  in the CP-ABE based single-cloud storage.
Our solutions are storage efficient and assume honest-but-curious cloud servers.
It remains open to consider malicious servers and also
 ensure the integrity of the cloud data against malicious servers.
  In Appendix \ref{DI}, we briefly discuss several existing
         techniques that may be used to solve the data integrity problem.

 \section*{Acknowledgments}
 This work was supported in part
by the National Natural Science Foundation of China (No. 62372299) and the Natural Science Foundation of Shanghai (No. 21ZR1443000).



\clearpage

\appendices

\section{Proof for Lemma 1}
\label{app:lem1}

 As ${\rm rank}({\bm H}_Q)=r$, there exist $r$ rows of ${\bm H}_Q$ that are
 linear independent over $\mathbb{F}$. Suppose that these rows are
 $({\bm h}_{w_1})^\top, \ldots, ({\bm h}_{w_r})^\top$, where
 $w_1,\ldots, w_r\in \psi^{-1}(Q)$.
 Below we show that there exists a set $K\subseteq [d]\setminus \{1\} $ such that
 the matrix $\bm U$ is invertible over $\mathbb{F}$.
 Assume for contradiction that $\bm U$ is not invertible for all
 $K\subseteq [d]\setminus \{1\}$. Then
  $((\bm h_{w_1})_{[d]\setminus\{1\}},\ldots,(\bm h_{w_r})_{[d]\setminus\{1\}})^\top$
  must be  an $r\times (d-1)$ matrix of rank $<r$.
Consequently, there exist $r$ constants $\alpha_1,\ldots,\alpha_r\in \mathbb{F}$, which are not all 0,  such that
     $\sum_{i\in[r]}\alpha_i (\bm h_{w_i})^\top=(\beta,0,\ldots,0)$.
If $\beta \ne 0$, then $\{P_{\psi(w_1)},\ldots,P_{\psi(w_r)}\}$ must be an  authorized set of participants. Due to monotonicity, $Q$ is   authorized as well, which
contradicts to the   fact that $Q$ is unauthorized.
     If $\beta=0$, then the rows  $({\bm h}_{w_1})^\top, \ldots, ({\bm h}_{w_r})^\top$
     must be linearly dependent over $\mathbb{F}$, which
     contradicts to our choice of the rows.

  \section{Proof for Theorem 1}
  \label{app:thm1}

		Let $\Gamma'$ be the access structure realized by $\mathcal M'$.
		It suffices to show that $\Gamma'=\Gamma_{\cdot Q}$.

		Firstly, we show that $\Gamma'\subseteq \Gamma_{\cdot Q}$.
		For any  $A\in \Gamma'$, there   exist  a set of constants $\{\alpha_i': \psi(i)\in A\}$ such that
			$\bm t=\sum_{i\in \psi^{-1}(A)} \alpha_i' \bm{h}_i'$.
		Note that $\bm{h}_i'=\bm{h}_i-\frac{h_{ik}}{h_{nk}}\bm{h}_n$ for every $i\in[n-1]$.
		So we have that
		\begin{equation*}
			\bm t=
			\sum_{i\in \psi^{-1}(A)} \alpha_i' \left(\bm{h}_i-
			\frac{h_{ik}}{h_{nk}}\bm{h}_n\right),
		\end{equation*}
		which shows that $\bm t$ is a linear combination of
		the vectors
		$
		\{\bm h_i: \psi(i)\in A\cup Q\}.
		$
		The set $A\cup Q$ must be   authorized  in  the access structure $\Gamma$.
		Hence,  $A\in \Gamma_{\cdot Q}$.

		Secondly, we show that $ \Gamma_{\cdot Q}\subseteq \Gamma'$.
		For any $A\in \Gamma_{\cdot Q}$, we have that
		$A\cup Q\in \Gamma$.
		Then there exist a set of  constants $\{\alpha_i: \psi(i)\in A\cup Q\}$ such that
		\begin{equation}
			\label{eqn:th12}
			\bm t=\sum_{i\in\psi^{-1}(A\cup Q)} \alpha_i \bm h_i=\sum_{i\in\psi^{-1}(A)} \alpha_i \bm{h}_i+\alpha_n \bm{h}_n.
		\end{equation}
		As $Q\in 2^{\mathcal P}\setminus \Gamma,$ there exists $k  \in [d]\setminus \{1\}$
		such that $h_{nk}\neq 0$. Due to Equation \eqref{eqn:th12}, we have that
			$0=t_k=\sum_{i\in\psi^{-1}(A)} \alpha_i h_{ik}+\alpha_n h_{nk}$.
		By solving it in $\alpha_n$, we have that
		\begin{equation}
			\label{eqn:th14}
			\alpha_n=-\sum_{i\in\psi^{-1}(A)} \alpha_i \frac{h_{ik}}{h_{nk}}.
		\end{equation}
		Equations \eqref{eqn:th12} and \eqref{eqn:th14} together imply that
		
		\begin{equation*}
				\bm t=\sum_{i\in\psi^{-1}(A)} \alpha_i \left(\bm{h}_i-\frac{h_{ik}}{h_{nk}}
				\bm{h}_n\right)=\sum_{i\in \psi'^{-1}(A)} \alpha_i \bm{h}_i',
		\end{equation*}
		i.e., $\bm t^\top$ is a linear combination of the row vectors of $\bm H'$ labeled by
		$(\psi')^{-1}(A)$. Hence, $A\in \Gamma'$.

\section{Proof for Theorem 2}
\label{app:thm2}
		Let $\Gamma'$ be the access structure realized by $\mathcal M'$.
		It suffices to show that $\Gamma'=\Gamma_{\cdot Q}$.
		
		Firstly, we show that $\Gamma'\subseteq \Gamma_{\cdot Q}$.
		For any  $A\in \Gamma'$, there  exist  a set of constants $\{\alpha_i': \psi(i)\in A\}$   such that
		\begin{equation}
			\label{eqn:th21}
			\bm t^\top=\sum_{i\in \psi^{-1}(A)} \alpha_i' \bm{h}_i'^\top.
		\end{equation}
		Note that    $\bm{h}_i'^\top=\bm{h}_i^\top-(\bm h_{i}^\top)_K\cdot \bm{U}^{-1}\cdot (\bm{h}_{w_1}, \ldots, \bm{h}_{w_r})^\top$  for each $i\in[n-m]$.
		Equation (\ref{eqn:th21}) can be translated into
\begin{align*}
				\bm t^\top=\sum_{i\in \psi^{-1}(A)} \alpha_i' \left(\bm{h}_i^\top-(\bm h_i^\top)_K\cdot \bm{U}^{-1}\cdot (\bm{h}_{w_1}, \ldots, \bm{h}_{w_r})^\top\right).
\end{align*}
		Note that for every $i\in \psi^{-1}(A)$, $(\bm h_i^\top)_K\cdot \bm{U}^{-1}$ is an $r$-dimensional row vector. It is not difficult to observe that
		$
		\sum_{i\in \psi^{-1}(A)} \alpha_i'(\bm h_i^\top)_K\cdot \bm{U}^{-1}\cdot (\bm{h}_{w_1}, \ldots, \bm{h}_{w_r})^\top
		$
		is a linear combination of
		the vectors $\{\bm{h}_{w_1}^\top, \ldots, \bm{h}_{w_r}^\top\}$.
 Hence, $\bm t$ is a linear combination of
		the vectors
		$
		\{\bm h_i: \psi(i)\in A\cup \{P_{w_1},\ldots,P_{w_r}\}\}.
		$
		The set $A\cup \{P_{w_1},\ldots,P_{w_r}\}$ must be   authorized  in $\Gamma$. Due to monotonicity, we have that $A\cup Q\in \Gamma$ and thus
 $A\in \Gamma_{\cdot Q}$.

		Secondly, we show that $ \Gamma_{\cdot Q}\subseteq \Gamma'$.
		For any $A\in \Gamma_{\cdot Q}$, we have that
		$A\cup Q\in \Gamma$.
		Then there exist  a set of  constants $\{\alpha_i: \psi(i)\in A\cup Q\}$ such that
		\begin{equation}
			\label{eq21}
			\bm t^\top=\sum_{i\in\psi^{-1}(A)} \alpha_i \bm{h}_i^\top+\sum_{j\in\psi^{-1}(Q)}\alpha_j \bm{h}_j^\top.
		\end{equation}
		Since the rank of the matrix  $(\bm h_{n-m+1},\ldots, \bm h_{n})^\top$ is $r$ and
 $W$ labels a set of $r$ linearly independent rows of the matrix, there must exist a set of constants $\{\alpha'_{w}:w\in W\}$ such that
		\begin{equation}
			\label{eq22}
			\sum_{j\in\psi^{-1}(Q)}\alpha_j \bm{h}_j^\top=\sum_{w\in W}\alpha_w \bm{h}_w^\top.
		\end{equation}
		Due to  Equations \eqref{eq21} and \eqref{eq22}, we have that
		\begin{equation}
			\label{eqn:th24}
			\begin{split}
				&\sum_{i\in\psi^{-1}(A)} \alpha_i (\bm{h}_i^\top)_K+\sum_{j\in\psi^{-1}(Q)}\alpha_j (\bm{h}_j^\top)_K\\
				=&\sum_{i\in\psi^{-1}(A)} \alpha_i (\bm{h}_i^\top)_K+\sum_{w\in W}\alpha_w \bm{h}_w^\top\\
				=&\sum_{i\in\psi^{-1}(A)} \alpha_i (\bm{h}_i^\top)_K+(\alpha'_{w})_{w\in W}\cdot \bm U\\
				=&\hspace{3.9mm}(\bm t^\top)_K\\
=&\hspace{6mm}\bm 0.
			\end{split}
		\end{equation}
		By solving the linear equation system \eqref{eqn:th24}, we have that
		\begin{equation}
			\label{eq23}
			(\alpha'_{w})_{w\in W}=-\sum_{i\in\psi^{-1}(A)} \alpha_i (\bm{h}_i^\top)_K\cdot \bm U^{-1}.
		\end{equation}
		Equations \eqref{eq21}, \eqref{eq22} and \eqref{eq23} together imply that
		
		\begin{equation*}
			\begin{split}
				\bm t^\top
				=&\hspace{1.5mm}\sum_{i\in\psi^{-1}(A)} \alpha_i \bm{h}_i^\top+\sum_{w\in W} \alpha'_{w} \bm{h}_{w}^\top\\
				=&\hspace{1.5mm}\sum_{i\in\psi^{-1}(A)} \alpha_i \bm{h}_i^\top+(\alpha'_{w})_{w\in W}\cdot (\bm{h}_{w_1}, \ldots, \bm{h}_{w_r})^\top\\
			 				=&\hspace{1.5mm}\sum_{i\in\psi^{-1}(A)} \alpha_i \left(\bm{h}_i^\top-(\bm h_{i}^\top)_K\cdot \bm{U}^{-1}\cdot (\bm{h}_{w_1}, \ldots, \bm{h}_{w_r})^\top\right)\\
				=&\sum_{i\in(\psi')^{-1}(A)} \alpha_i \bm{h}_i'^\top.
			\end{split}
		\end{equation*}
Thus, $\bm t^\top$ is a linear combination of the rows of $\bm H'$ labeled by
		$(\psi')^{-1}(A)$. Therefore, $A\in \Gamma'$.

\section{A Toy Example of Storage Relocation}
	\label{app_toyexmaple}
	Let $\mathcal P=\{P_1,P_2,P_3,P_4\}$ be a set of four servers. Consider an access structure $\Gamma$ over $\mathcal P$ such that  $\Gamma^-=\{\{P_1,P_2,P_4\},\{P_1,P_3,P_4\}\}$. Let $\mathcal M=(\mathbb F_2,\bm H,\bm t,\psi)$ be an MSP and   an ideal LSSS for $\Gamma$, where
	\[\bm{H}=(\bm h_1,\bm h_2,\bm h_3,\bm h_4)^\top=
	\begin{bmatrix}
		1	&	0	&	1	\\
		0	&	1	&	1	\\
		0	&	1	&	1	\\
		0	&	1	&	0	\\	
	\end{bmatrix},
	\]
	$\bm t=(1,0,0)^\top$, and $\psi(i)=P_i$ for each $i\in[4]$. To share a secret $s\in \mathbb F_2$ with $\mathcal M$, the dealer chooses  $r_2,r_3\leftarrow \mathbb F_2$, defines $\bm v=(s,r_2,r_3)^\top$, and gives $s_i=\bm h_i^\top\bm v$ to $P_i$ for all $i\in[4]$. Let $Q=\{P_4\}$ be an unauthorized subset. We  show how to use the methods in Section {\ref{sec:appmc}} to solve (p1).
	
\vspace{1mm}
	\noindent \textbf{Our method.}
 By   running {Algorithm 1} on input $({\cal M},Q)$ with
	$k=2$, we get an LSSS $\mathcal M'=(\mathbb F,\bm H',\bm t,\psi')$
for $\Gamma_{\cdot Q}$,   where
	\[\bm{H}'=(\bm h_1',\bm h_2',\bm h_3')^\top=
	\begin{bmatrix}
		1	&	0	&	1	\\
		0	&	0	&	1	\\
		0	&	0	&	1	\\
	\end{bmatrix}
	\]
	and $\psi'(i)=P_{i}$ for all $i\in [3]$. Our method requires
 each server  $P_i\in \mathcal{P}\setminus Q$ to compute and store a new share
			$s_i'=s_i-\frac{h_{i2}}{h_{42}}s_4$.
	Specifically,  $s_1'=s_1$,  $s_2'=s_2-s_4$, and  $s_3'=s_3-s_4$.

	\vspace{1mm}
	\noindent \textbf{Martin's method.}
To use Martin's method, the SSS $\cal M$ can be represented as the following matrix
	\[\bm M=\bordermatrix{
		&	P_1	&	P_2	&	P_3	&	P_4	\cr
		(0,(0,0))	&	0	&	0	&	0	&	0	\cr
		(0,(0,1))	&	1	&	1	&	1	&	0	\cr
		(0,(1,0))	&	0	&	1	&	1	&	1	\cr
		(0,(1,1))	&	1	&	0	&	0	&	1	\cr
		(1,(0,0))	&	1	&	0	&	0	&	0	\cr
		(1,(0,1))	&	0	&	1	&	1	&	0	\cr
		(1,(1,0))	&	1	&	1	&	1	&	1	\cr
		(1,(1,1))	&	0	&	0	&	0	&	1	\cr
	},
	\]
	whose rows and columns   are labeled by the elements of $\mathbb F_2\times\mathbb F_2^2$ and $\mathcal P$, respectively. By choosing $\bm \alpha=(0)$,   the
SSS   can be represented with
	\[\bm M'=\bordermatrix{
		&	P_1	&	P_2	&	P_3	\cr
		(0,(0,0))	&	0	&	0	&	0	\cr
		(0,(0,1))	&	1	&	1	&	1	\cr
		(1,(0,0))	&	1	&	0	&	0	\cr
		(1,(0,1))	&	0	&	1	&	1	\cr
	}.
	\]
It suffices to keep the shares of  $ P_1,P_2$ and $ P_3 $ unchanged and
send  the share of $Q=\{P_4\}$   to a public storage.
	
	\vspace{1mm}
	\noindent \textbf{Nikov-Nikova method.}
This method will   construct an MSP $\mathcal M'=(\mathbb F_2,\bm H',\bm t,\psi')$  for $\Gamma_{\cdot Q}$, where
	\[\bm{H}'=
	\begin{bmatrix}
		1	&	0	&	1	\\
		0	&	1	&	1	\\
		0	&	1	&	1	\\
		0	&	1	&	0	\\
		0	&	1	&	0	\\
		0	&	1	&	0	\\
	\end{bmatrix}
	\]
	and $(\psi')^{-1}(P_{i})=\{i,i+3\}$ for every $i\in [3]$.
	It requires  $P_4$  to transfer its share to every server in
$\{P_1,P_2,P_3\}$ and requires each of these servers
to additionally store   $s_4$. The new scheme is  \emph{non-ideal}.
	
	\vspace{1mm}
	\noindent \textbf{Extended Nikov-Nikova method.}
This method requires every
 authorized subset to keep a copy of the share $s_4$ of $Q$. As
  $(\Gamma_{\cdot Q})^-=\{\{P_1,P_2\},\{P_1,P_3\}\}$,
  the method can be optimized by $P_1$ storing  $s_4$.

\section{A Solution to Storage Relocation} \label{appendixE}
Here we give a more intuitive solution for the storage relocation problem, which {is equivalent} to our method in Section {\ref{ourmd}}.

Let $\mathcal M=(\mathbb F,\bm H,\bm t,\psi)$ be an MSP and an ideal LSSS for $\Gamma$, where $\bm H=(\bm h_1,\ldots, \bm h_n)^\top\in \mathbb F^{n \times d}$,
	$\bm t=(1,0,\ldots,0)^\top\in \mathbb{F}^d$, and $\psi(i)=P_i$ for every $i\in[n]$.
	For any unauthorized subset $Q\subseteq \mathcal P$,
the shares $\{s_j\}_{P_j\in Q}$ leaves  the secret $s$ completely undetermined, i.e.,
from $Q$'s view of point, any element of $\mathbb{F}$ could be $s$.
In particular, there must exist a vector $\bm v'=(0,r'_2,\ldots,r'_d)^\top\in \mathbb F^d$ such that  $s_j=\bm h_j^\top\bm v'$ for all $P_j\in Q$.
Such a vector ${\bm v}'$ may be not uniquely determined.
Based on these observations,  an alternative solution of   (p1) can be described as follows:
	\emph{
		\begin{itemize}
			\item Every server   $P_j\in Q$ broadcasts its shares $s_j$.
			\item The servers in ${\mathcal P}\setminus Q$ agree on  a vector
  $\bm v'$ such that $s_j=\bm h_j^\top   \bm v'$ for all $P_j\in Q$.
  \item Every server $P_i\in {\cal P}\setminus Q$ replaces its share $s_i$ with
$s'_i=s_i-\bm h_i^\top\bm v'$.
	\end{itemize}}
\noindent	
Note that there is a vector ${\bm v}=(s,r_2,\ldots,r_d)$ such that
 $s_i={\bm h}^\top_i  {\bm v}$ for all $i\in[n]$.
The correctness of this solution follows from the fact  that $\{s'_i\}_{i\in[n]}$ is a set of  valid shares of   $s$, which are generated by choosing a random vector
$\bm v''=\bm v-\bm v'=(s,r_2-r'_2,\ldots,r_d-r_d')$ and
 setting $s'_i={\bm h}^\top_i  {\bm v}''$, and the  fact that
 $s'_j=0$ for  all $P_j\in Q$ and   the
 shares of $Q$ are actually not needed in any reconstruction.


Now we show that this method is equivalent to our method.
Due to Lemma 1, if we let ${\bm H_Q}=\big((\bm h_i)_{\psi(i)\in Q}\big)^\top$ and assume that ${\rm rank}({\bm H}_Q)=r$, then there  exists a set $W=\{w_1,\ldots,w_r\}\subseteq \psi^{-1}(Q)$ and a set  $K=\{k_1,\ldots,k_r\}\subseteq [d]\setminus\{1\}$ such that the order-$r$ square matrix $\bm U=((\bm h_{w_1})_K,\ldots,(\bm h_{w_r})_K)^\top$ is invertible over $\mathbb{F}$. To find a vector $\bm v'=(0,r'_2,\ldots,r'_d)^\top\in \mathbb F^d$ such that  $s_j=\bm h_j^\top\bm v'$ for all $P_j\in Q$, it suffices to solve the equation $(\bm h_{w_1},\ldots,\bm h_{w_r})^\top \bm v'=(s_{w_1},\ldots,s_{w_r})^\top$ in $\bm v'$. It is not difficult to observe that there exists a vector $\bm v'$, where $r'_i=0$ for each $i\in ([d]\setminus\{1\})\setminus K$ and $(r'_{k_1},\ldots,r'_{k_r})^\top=\bm U^{-1}(s_{w_1},\ldots,s_{w_r})^\top$, such that $s_j=\bm h_j^\top\bm v'$ for all $P_j\in Q$. Then for every server $P_i\in \mathcal P\setminus Q$,
\begin{equation*}
	\begin{split}
	s'_i=s_i-\bm h_i^\top \bm v'&=s_i-\sum_{k\in K} h_{ik}r'_k\\
	&=s_i-(\bm h_i^\top)_K\cdot (r'_{k_1},\ldots,r'_{k_r})^\top\\
	&=s_i-(\bm h_i^\top)_K\cdot \bm U^{-1}(s_{w_1},\ldots,s_{w_r})^\top.
	\end{split}
\end{equation*}
This shows that the two methods are equivalent. Although our method is less intuitive, it can be well applied to construct the CP-ABE-CAS, the  above method cannot.

\section{Bilinear Pairing}
\label{app:bp}

\begin{definition}[{\bf Bilinear Pairing}]
	Let $\mathbb G$ and $\mathbb G_T$ be two multiplicative cyclic groups of prime order $p$. Let $g$ be a generator of $\mathbb G$ and $e:\mathbb G\times\mathbb G\rightarrow \mathbb G_T$ be a bilinear map such that: (1) $e(u^a,v^b)=e(u,v)^{ab}$ for all $u,v\in \mathbb G$ and $a,b\in \mathbb Z_p$; (2) $e(g,g)\ne 1$.
	We say that $\mathbb G$ is a bilinear group if the group operations in $\mathbb G$ and $\mathbb G_T$ as well as the bilinear map $e$ are efficiently computable.
 \end{definition}

\section{Security of our CP-ABE-CAS scheme} \label{appendix:CPABEsecur}
	    \begin{theorem}
	    \label{3}
	        A CP-ABE-CAS is secure if Waters' CP-ABE scheme is secure.
	    \end{theorem}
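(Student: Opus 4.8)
The plan is to give a tight black-box reduction: from any PPT adversary $\mathcal A$ with advantage $\epsilon$ in the CP-ABE-CAS game $G_2$ I build a PPT adversary $\mathcal B$ with advantage $\epsilon$ in Waters' game $G_1$, so that Waters' security forces $\epsilon$ to be negligible. Since $\mathsf{Contract}_{\rm sct}$ is specified for $|Q|=1$, I treat $Q=\{y\}$ with $\psi^{-1}(y)=\ell$ and note the extension to $|Q|>1$ is analogous. The guiding observation is that the only ciphertext-dependent data the $G_2$ challenge reveals beyond an ordinary Waters ciphertext are the contraction key $\mathrm{CK}_Q=\{r_\ell\}$ and the contracted ciphertext $\mathrm{CT}^*_{\cdot Q}$, and the latter is a deterministic function of $\mathrm{CT}^*$ and $r_\ell$. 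Hence the whole difficulty concentrates on producing one consistent piece of randomness.

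First I would have $\mathcal B$ forward $\mathrm{PK}$ to $\mathcal A$ and relay every Phase 1 and Phase 2 key query to its own challenger. When $\mathcal A$ outputs $(M_0,M_1,\Gamma^*,\mathcal M,Q)$, the reduction runs Algorithm \ref{alg:s1} on $(\mathcal M,Q)$ to get the contracted MSP $\mathcal M'$ and submits $(M_0,M_1,\Gamma^*_{\cdot Q},\mathcal M')$ as its own $G_1$ challenge, receiving a Waters ciphertext $\mathrm{CT}'=\{\mathcal M',C,C',(\tilde C_i,D_i)_{i\in[\ell-1]}\}$ of $M_b$. Crucially, because $\mathcal B$'s declared challenge structure is $\Gamma^*_{\cdot Q}$, the $G_1$ key-query restriction (no queried set satisfies $\Gamma^*_{\cdot Q}$) coincides verbatim with the $G_2$ restriction imposed on $\mathcal A$, so every relayed query is legitimate. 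The plan is then to set $\mathrm{CT}^*_{\cdot Q}:=\mathrm{CT}'$ and to reconstruct a matching original ciphertext by inverting the contraction.

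The heart of the argument is that this inversion needs only a single group element $R:=g^{as_\ell}$: one recovers $\mathrm{CT}^*$ by putting $C_i:=\tilde C_i\cdot R^{h_{ik}/h_{\ell k}}$ for $i\in[\ell-1]$, sampling $r_\ell\leftarrow\mathbb Z_p$, and setting $C_\ell:=R\cdot T_y^{-r_\ell}$, $D_\ell:=g^{r_\ell}$, $\mathrm{CK}_Q:=\{r_\ell\}$; a direct check shows $\mathsf{Contract}_{\rm sct}(\mathrm{PK},\mathrm{CT}^*,Q,\{r_\ell\})=\mathrm{CT}'$, so the triple $(\mathrm{CT}^*,\mathrm{CT}^*_{\cdot Q},\mathrm{CK}_Q)$ is internally consistent. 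The step I expect to be the main obstacle is justifying that $\mathcal B$ may sample $R\leftarrow\mathbb G$ uniformly at random and still reproduce the exact $G_2$ distribution. I would settle this information-theoretically: since $Q$ is unauthorized and the index $k$ was chosen with $h_{\ell k}\neq 0$, the target $\bm t$ and every contracted row $\bm h'_i$ carry a zero in coordinate $k$, while $s_\ell=\bm h_\ell^\top\bm v$ has the nonzero coefficient $h_{\ell k}$ on the otherwise free coordinate $v_k$. Therefore, conditioned on the secret $s$ and all contracted shares $s'_1,\dots,s'_{\ell-1}$, the share $s_\ell$ is uniform and independent, so $R=g^{as_\ell}$ is uniform and independent of $\mathrm{CT}'$; together with the fact that Waters' challenger already draws $\mathcal M'$-shares and fresh row randomness from the correct distributions, this makes the simulation perfect. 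Decryptability of $\mathrm{CT}^*_{\cdot Q}$ under $\Gamma^*_{\cdot Q}$ is inherited from Theorem \ref{1}. Finally $\mathcal B$ echoes $\mathcal A$'s guess $b'$, yielding $\mathsf{Adv}^{G_2}_{\mathcal A}=\mathsf{Adv}^{G_1}_{\mathcal B}$ and completing the reduction.
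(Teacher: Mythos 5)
Your proposal is correct and is essentially the paper's own reduction: the paper's $\mathcal B$ likewise submits $(M_0,M_1,\Gamma^*_{\cdot Q},\mathcal M')$ to the Waters challenger and then lifts the returned ciphertext into a consistent triple $(\mathrm{CT}^*,\mathrm{CT}^*_{\cdot Q},\mathrm{CK}_Q)$ by sampling fresh $r_\ell,s_\ell$ and setting $C_i'=C_i\cdot g^{as_\ell h_{ik}/h_{\ell k}}$, $C_\ell'=g^{as_\ell}T_y^{-r_\ell}$, $D_\ell=g^{r_\ell}$, which is exactly your inversion with $R=(g^a)^{s_\ell}$. The differences are presentational—you spell out the key-query relaying and the perfect-simulation argument (uniformity of $s_\ell$ conditioned on the contracted shares), which the paper leaves implicit—though sampling the exponent $s_\ell$ as the paper does, rather than $R\in\mathbb G$ directly, is slightly cleaner since it also covers the degenerate case $a=0$.
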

	    \begin{proof}
	        Assume that a CP-ABE-CAS is not fully secure. Then there exists an adversary $\mathcal A$ who has a non-negligible advantage in $G_2$.
	        Next we show that an adversary $\mathcal B$ can win the game $G_1$ by means of the advantage of $\mathcal A$.
	        \begin{itemize}
	            \item \textit{Setup:} The challenger runs the setup algorithm and gives the public parameter $\mathrm{PK}$ to $\mathcal A$ and $\mathcal B$.
	            \item \textit{Challenge:} $\mathcal A$ chooses two equal length messages $M_0, M_1$, an ideal access structure $\Gamma^*$, an LSSS $\mathcal M$ for $\Gamma^*$, and an unauthorized subset $Q=\{y\}$ of $\Gamma^*$, (w.l.o.g, assume that $\psi^{-1}(y)=\ell$) and then sends them to $\mathcal B$. $\mathcal B$ runs Algorithm 1 on input $(\mathcal M,Q)$, obtains the output $\mathcal M'$ for $\Gamma^*_{\cdot Q}$ and records $k$. Then $\mathcal B$ sends $M_0,M_1,\Gamma^*_{\cdot Q},\mathcal M'$ to the challenger. The challenger randomly chooses $b\in\{0,1\}$, and encrypts $M_b$ under $\Gamma^*_{\cdot Q}$, producing $\mathrm{CT}^*_{\cdot Q}=\{\mathcal M',C,C',(C_i,D_i)_{i\in[\ell-1]}\}$. It gives $\mathrm{CT}^*_{\cdot Q}$ to $\mathcal B$.
	
	            \item \textit{Adversary $\mathcal B$:} The adversary $\mathcal B$ chooses two random values $r_\ell,s_\ell$ and creates
				\begin{align*}
					\mathrm{CK}_Q&=r_\ell, \hspace{4em} C_i'=C_i\cdot g^{as_\ell\cdot \frac{h_{ik}}{h_{\ell k}}},\\
					C_{\ell}'&=g^{as_\ell} T_y^{-r_{\ell}},\hspace{0.7em} D_{\ell}=g^{r_\ell}\\
	            \end{align*}
	            for each $i\in [\ell]\setminus\{\psi^{-1}(y)\}$. It then produces a new ciphertext $\mathrm{CT}^*=\{\mathcal M,C,C',(C_i',D_i)_{i\in[\ell]}\}$. It gives $\mathrm{CT}^*,\mathrm{CT}^*_{\cdot Q},\mathrm{CK}_Q$ to the adversary $\mathcal B$.
	            \item \textit{Adversary $\mathcal A$:} The adversary $\mathcal A$ returns a guess $b'$.
	            \item \textit{Guess:} The adversary $\mathcal B$ output $b'$.
	 \end{itemize}
	
	 Since $\mathcal A$ has a non-negligible advantage in $G_2$, $\mathcal B$ will also have a non-negligible advantage to win $G_1$.
	 \end{proof}
	
	 Waters' CP-ABE scheme has been showed to be secure under the decisional $q$-parallel Bilinear Diffie-Hellman Exponent Assumption with $q\ge\ell,d$, where $\ell\times d$ is the size of the matrix $\bm H$ in the LSSS $\mathcal M$, so the CP-ABE-CAS is also secure under the same assumption according to Theorem \ref{3}.

\section{On the Data Integrity}
\label{DI}

        The data integrity is another critical security concern in cloud storage and requires that after the access structure contraction/policy update, the original data can still be recovered from the updated shares/ciphertexts.

      \vspace{1mm}
      \noindent
{\bf Single-cloud storage.}
  In single-cloud storage,  the data integrity problem has been considered in \cite{WZMZ23,GSB+21}. Ge et al. \cite{GSB+21} proposed a revocable CP-ABE scheme that solves the
   data integrity problem  by adding a commitment to  Waters' scheme \cite{Wat11}.
 In our model, we assume   the server is honest-but-curious and  only the data owner {is allowed} to contract the policy. If we consider   {\em malicious} servers/users,
 two threats may appear:
         \begin{itemize}
             \item A malicious server may tamper with the ciphertexts.
             \item A malicious user may impersonate the data owner and sends an arbitrary contraction key $\widehat{\rm CK}$ to the server.
         \end{itemize}
         Fortunately, both the above two threats can be relieved by applying our policy updating techniques from Section \ref{sec:app} to the scheme of  \cite{GSB+21}.

      \vspace{1mm}
      \noindent
{\bf Multi-cloud storage.}
        In multi-cloud storage, provable data possession (PDP) \cite{ABC+07,HCY+21}  and proofs of retrievability (PoR) \cite{SW08,ADJ+21} have been invented for ensuring the data integrity. PDPs ensure data integrity through the validation of file block signatures.
        PoRs incorporate error-correcting codes to provide not only the data integrity but also the recoverability of corrupted data.   In addition, techniques like homomorphic authenticators (HAs) \cite{CF13,FXZ22,ZS14}   provide an alternative approach to solve the data integrity problem  in multi-cloud storage through authenticating the shares on the servers.

\vfill


\begin{thebibliography}{11}
\bibliographystyle{IEEEtran}



	\bibitem{MTB+19}
	V.~Miranda{-}L{\'{o}}pez, A.~Tchernykh, M.~G. Babenko, V.A. Kuchukov, M.A.
	  Deryabin, E.~Golimblevskaia, E.~Shiryaev, A.~Avetisyan,
	  R.~Rivera{-}Rodr{\'{\i}}guez, G.~I. Radchenko, and E.~Talbi, ``Weighted
	  two-levels secret sharing scheme for multi-clouds data storage with increased
	  reliability,'' in \emph{{HPCS} '19}: 915--922.

	\bibitem{NCD+21}
	J.~Ning, Z.~Cao, X.~Dong, K.~Liang, L.~Wei and K.R. Choo, ``Cryptcloud$^+$:
	  Secure and expressive data access control for cloud storage,'' {\em  IEEE Trans. Serv. Comput.}, 14(1): 111--124 (2021).

	\bibitem{DCT+18}
	M.A. Deryabin, N.I. Chervyakov, A.~Tchernykh, M.G. Babenko, N.N. Kucherov,
	  V.~Miranda{-}L{\'{o}}pez, and A.~Avetisyan, ``Secure verifiable secret short
	  sharing scheme for multi-cloud storage,'' in \emph{{HPCS} '18}: 700--706.

%


	

	\bibitem{BSW07}
	J.~Bethencourt, A.~Sahai, and B.~Waters, ``Ciphertext-policy attribute-based
	  encryption,'' in \emph{S{\&}P '07}: 321--334.


	\bibitem{TJL+21}
	Z.~Ying, W.~Jiang, X.~Liu, S.~Xu, and R.~Deng, ``Reliable policy updating under
	  efficient policy hidden fine-grained access control framework for cloud data
	  sharing,'' \emph{IEEE Trans. Serv. Comput.}, 15(6): 3485--3498 (2021).



	\bibitem{FEJ15}
	B.~Fabian, T.~Ermakova, and P.~Junghanns, ``Collaborative and secure sharing of
	  healthcare data in multi-clouds,'' \emph{Inf. Syst.}, 48: 132--150 (2015).


	\bibitem{LQLL14}
	M.~Li, C.~Qin, P.P.C. Lee, and J.~Li, ``Convergent dispersal: Toward
	  storage-efficient security in a cloud-of-clouds,'' in \emph{HotCloud
	  '14}.

	\bibitem{ZYT+18}
	H.~Zhang, J.~Yu, C.~Tian, P.~Zhao, G.~Xu, and J.~Lin, ``Cloud storage for
	  electronic health records based on secret sharing with verifiable
	  reconstruction outsourcing,'' \emph{{IEEE} Access}: 40713--40722
	  (2018).


	
	\bibitem{BGI16}
	E.~Boyle, N.~Gilboa, and Y.~Ishai, ``Breaking the circuit size barrier for
	  secure computation under {DDH},'' in \emph{{CRYPTO} '16}, 9814: 509--539 (2016).
	
	\bibitem{ZS19}
L.F. Zhang and R. Safavi-Naini,
``Protecting data privacy in publicly verifiable delegation of matrix and polynomial functions,''
 \emph{Des. Codes Cryptogr.},  88(4): 677--709 (2020).

\bibitem{ZW22}
L.F. Zhang and H. Wang,
``Multi-server verifiable computation of low-degree polynomials,''
in \emph{S\&P '22}: 596--613.



%
	\bibitem{OSZ19}
	G.~Ohtake, R.~Safavi{-}Naini, and L.~F. Zhang, ``Outsourcing scheme of {ABE}
	  encryption secure against malicious adversary,'' \emph{Comput. Secur.}, 86: 437--452 (2019).

   \bibitem{Sha79}
	A.~Shamir, ``How to share a secret,'' \emph{Commun. {ACM}}, 1979.
	
	\bibitem{BR79}
	G.R. Blakley, ``Safeguarding cryptographic keys,'' in \emph{Managing
	  Requirements Knowledge, International Workshop on}: 313--313 (1979).
	
	
	
	\bibitem{ISN89}
	M.~Ito, A.~Saito, and T.~Nishizeki, ``Secret sharing scheme realizing general
	  access structure,''  {\em Electronics and Communications in Japan (Part III: Fundamental Electronic Science)}, 72: 56--64 (1989).
	
	\bibitem{BGW88}
	M.~Ben{-}Or, S.~Goldwasser, and A.~Wigderson, ``Completeness theorems for
	  non-cryptographic fault-tolerant distributed computation,'' in \emph{STOC '88}: 1--10.
	
	
	\bibitem{CGKS95}
	B.~Chor, O.~Goldreich, E.~Kushilevitz, and M.~Sudan, ``Private information
	  retrieval,'' in \emph{FOCS '95}: 41--50.
	
	
	\bibitem{SW05}
	A.~Sahai and B.~Waters, ``Fuzzy identity-based encryption,'' in
	  \emph{{EUROCRYPT} '05}: 457--473.
	
	
	


	

	
	
	
	


        \bibitem{WYN+23}
        J.~Wang, X.~Yin, J.~Ning, S.~Xu, G.~Xu, and X.~Huang, ``Secure updatable storage access control system for EHRs in the cloud,'' \emph{IEEE Trans. Serv. Comput.}, 16(4): 2939--2953 (2023).

        \bibitem{WZMZ23}
        T.~Wang, Y.~Zhou, H.~Ma, and R.~Zhang, ``Flexible and controllable access policy update for encrypted data sharing in the cloud,'' \emph{Comput. J.}, 66(6): 1507--1524 (2023).

        \bibitem{YWW23}
        M. Yang, H. Wang, and Z. Wan, ``PUL-ABE: An efficient and quantum-resistant CP-ABE with policy update in cloud storage,'' \emph{IEEE Trans. Serv. Comput.}, Early Access (2023).

          \bibitem{VHAI22}
E. Valavi, J. Hestness, N. Ardalani, and M. Iansiti, ``Time and the value of data,'' CoRR abs/2203.09118 (2022).

\bibitem{Mar93}
	K.~Martin, ``New secret sharing schemes from
	  old,'' \emph{J. Comb. Math. Comb. Comput.}, 14: 65--77 (1993).

	
	\bibitem{NN04}
	V.~Nikov and S.~Nikova, ``New monotone span programs from old,'' \emph{{IACR}	  Cryptol. ePrint Arch.}, 2004: 282.

   \bibitem{Bri89}
	E.~F. Brickell, ``Some ideal secret sharing schemes,'' in \emph{EUROCRYPT
	  '89}: 468--475.

%


\bibitem{WFZ+23} 
N. Wang, J. Fu, S. Zhang, Z. Zhang, J. Qiao, J. Liu, and B.K. Bhargava, ``Secure and distributed IoT data storage in clouds based on secret sharing and collaborative blockchain,'' {\em IEEE/ACM Trans. Netw.}, 31(4): 1550--1565 (2023).

\bibitem{LCH22} 
T. Li, J. Chu, and L. Hu, ``CIA: a collaborative integrity auditing scheme for cloud data with multi-replica on multi-cloud storage providers,'' {\em
 IEEE Trans. Parallel Distributed Syst.}, 34(1):154--162, (2023).

\bibitem{SAR18} 
P. Singh, N. Agarwal, and B. Raman, ``Secure data deduplication using secret sharing schemes over cloud,'' {\em Future Gener. Comput. Syst.}, 88: 156--167 (2018).

\bibitem{HK23} 
M. Hayashi and T. Koshiba, ``Universal adaptive construction of verifiable secret sharing and its application to verifiable secure distributed data storage,'' {\em IEEE/ACM Trans. Netw.}, Early Access (2023).

\bibitem{JSMG18}
	Y.~Jiang, W.~Susilo, Y.~Mu, and F.~Guo, ``Ciphertext-policy attribute-based
	  encryption supporting access policy update and its extension with preserved
	  attributes,'' \emph{Int. J. Inf. Sec.}, 2018.
	
\bibitem{Wat11}
	B.~Waters, ``Ciphertext-policy attribute-based encryption: An expressive,
	  efficient, and provably secure realization,'' in \emph{{PKC} '11}: 53--70.

\bibitem{GPSW06}
	V.~Goyal, O.~Pandey, A.~Sahai, and B.~Waters, ``Attribute-based encryption for
	  fine-grained access control of encrypted data,'' in \emph{{CCS} '06}: 89--98.

	\bibitem{SSW12}
	A.~Sahai, H.~Seyalioglu, and B.~Waters, ``Dynamic credentials and ciphertext
	  delegation for attribute-based encryption,'' in \emph{{CRYPTO} '12}: 199--217.
	
	\bibitem{LCLS09}
	X.~Liang, Z.~Cao, H.~Lin, and J.~Shao, ``Attribute based proxy re-encryption
	  with delegating capabilities,'' in \emph{{AsiaCCS} '09}: 276--286.
	
	\bibitem{YWRL10}
	S.~Yu, C.~Wang, K.~Ren, and W.~Lou, ``Attribute based data sharing with
	  attribute revocation,'' in \emph{{ASIACCS} '10}: 261--270.
	
	\bibitem{LFSW13}
	K.~Liang, L.~Fang, W.~Susilo, and D.~S. Wong, ``A ciphertext-policy
	  attribute-based proxy re-encryption with chosen-ciphertext security,'' in
	  \emph{INCoS '13}: 552--559.
	
	\bibitem{LAS+14}
	K.~Liang, M.H. Au, W.~Susilo, D.S. Wong, G.~Yang, and Y.~Yu, ``An adaptively
	  cca-secure ciphertext-policy attribute-based proxy re-encryption for cloud
	  data sharing,'' in \emph{{ISPEC} '14}: 448--461.
	
	\bibitem{ASLK19}
	R.R. Al{-}Dahhan, Q.~Shi, G.M. Lee, and K.~Kifayat, ``Survey on revocation in
	  ciphertext-policy attribute-based encryption,'' \emph{Sensors}, 19(7): 1695 (2019).
	
	\bibitem{GSB+21}
	C.~Ge, W.~Susilo, J.~Baek, Z.~Liu, J.~Xia, and L.~Fang, ``Revocable
	  attribute-based encryption with data integrity in clouds,'' \emph{IEEE Trans.
	  Dependable Secure Comput.}: 1--1 (2021).
	
	\bibitem{SJG+17}
	W.~Susilo, P.~Jiang, F.~Guo, G.~Yang, Y.~Yu, and Y.~Mu, ``{EACSIP:} extendable
	  access control system with integrity protection for enhancing collaboration
	  in the cloud,'' \emph{{IEEE} Trans. Inf. Forensics Secur.}, 12(12): 3110--3122 (2017).
	
	\bibitem{LGS+21}
	J.~Lai, F.~Guo, W.~Susilo, X.~Huang, P.~Jiang, and F.~Zhang, ``Data access
	  control in cloud computing: Flexible and receiver extendable,'' \emph{IEEE Trans. Serv. Comput.}: 1--1 (2021).

 \bibitem{XHL+20}
	H.~Xiong, C.~Hu, Y.~Li, G.~Wang, and H.~Zhou, ``Secure secret sharing with
	  adaptive bandwidth in distributed cloud storage systems,'' \emph{{IEEE}
	  Access}, 8: 108148--108157 (2020).

	\bibitem{Cac95}
	C.~Cachin, ``On-line secret sharing,'' in \emph{Cryptography and Coding, 5th
	  {IMA} Conference}:
	  190--198 (1995).
	
%
	
	\bibitem{YYQ09}
	S.~Ye, G.~Yao, and Q.~Guan, ``A multiple secrets sharing scheme with general
	  access structure,'' in \emph{IUCE '09}: 461--464.
	
	\bibitem{YL19}
	J.~Yuan and L.~Li, ``A fully dynamic secret sharing scheme,'' \emph{Inf. Sci.}, 496: 42--52 (2019).
	
	
	\bibitem{TKH15}
	M.H. Tadayon, H.~Khanmohammadi, and M.S. Haghighi, ``Dynamic and verifiable
	  multi-secret sharing scheme based on hermite interpolation and bilinear
	  maps,'' \emph{{IET} Inf. Secur.}: 234--239 (2015).
	
	
	\bibitem{HJKY95}
	A.~Herzberg, S.~Jarecki, H.~Krawczyk, and M.~Yung, ``Proactive secret sharing
	  or: How to cope with perpetual leakage,'' in \emph{{CRYPTO} '95}: 339--352.
	
%
%
	
	\bibitem{Mas17}
	S.~Mashhadi, ``Secure publicly verifiable and proactive secret sharing schemes
	with general access structure,'' \emph{Inf. Sci.}, 378: 99--108 (2017).
	
	\bibitem{SLL10}
	D.~A. Schultz, B.~Liskov, and M.D. Liskov, ``{MPSS:} mobile proactive secret
	  sharing,'' \emph{{ACM} Trans. Inf. Syst. Secur.}, 13(4): 34:1--34:32 (2010).
	
	
	\bibitem{MZW19}
	S.K.D. Maram, F.~Zhang, L.~Wang, A.~Low, Y.~Zhang, A.~Juels, and D.~Song,
	  ``{CHURP:} dynamic-committee proactive secret sharing,'' \emph{{IACR}
	  Cryptol. ePrint Arch.}, 2019: 17.
	
	
	
	\bibitem{NNPV02}
	V.~Nikov, S.~Nikova, B.~Preneel, and J.~Vandewalle, ``Applying general access
	  structure to proactive secret sharing schemes,'' \emph{{IACR} Cryptol. ePrint
	  Arch.}, 2002: 141.
	
	\bibitem{NNPV02a}
	V.~Nikov, S.~Nikova, B.~Preneel, and J.~Vandewalle, ``On distributed key
	  distribution centers and unconditionally secure proactive verifiable secret
	  sharing schemes based on general access structure,'' in \emph{INDOCRYPT
	  '02}: 422--435.
	
	\bibitem{Sli20}
	A.~Slinko, ``Ways to merge two secret sharing schemes,'' \emph{{IET} Inf.
	  Secur.}, 14(1): 146--150 (2020).
	
	\bibitem{Bei11}
	A.~Beimel, ``Secret-sharing schemes: {A} survey,'' in \emph{{IWCC} '11}: 11--46.
	
	\bibitem{KW93}
	M.~Karchmer and A.~Wigderson, ``On span programs,'' in \emph{SCT '93}.
	
	\bibitem{Sti92}
	D.~R. Stinson, ``An explication of secret sharing schemes,'' \emph{Des. Codes
	  Cryptogr.}, 2(4): 357--390 (1992).
	
	\bibitem{BIW10}
	O.~Barkol, Y.~Ishai, and E.~Weinreb, ``On \emph{d}-multiplicative secret
	  sharing,'' \emph{J. Cryptol.}, 23(4): 580--593 (2010).
	
	\bibitem{TLM18}
	G.~Tsaloli, B.~Liang, and A.~Mitrokotsa, ``Verifiable homomorphic secret
	  sharing,'' in \emph{ProvSec '18}: 40--55.




\bibitem{ABC+07}
G. Ateniese, R. Burns, R. Curtmola, J. Herring, L. Kissner, Z. Peterson, and D. Song,
``Provable data possession at untrusted stores,'' in {\em CCS '07}: 598--609.



	

%
%
%

\bibitem{HCY+21}
K. He, J. Chen, Q. Yuan, S. Ji, D. He, and R. Du,
``Dynamic group-oriented provable data possession in the cloud,'' {\em IEEE Trans. Dependable Secur. Comput.}, 18(3): 1394-1408 (2021).

\bibitem{SW08}
H. Shacham and B. Waters, ``Compact proofs of retrievability,'' in {\em ASIACRYPT '08}: 90--107.

%
%
%

\bibitem{ADJ+21}
G. Anthoine, J. Dumas, M. Jonghe, A. Maignan, C. Pernet, M. Hanling, and D.S. Roche,
``Dynamic proofs of retrievability with low server storage,'' in {\em USENIX Security '21}: 537--554.

%

\bibitem{CF13}
 D. Catalano and D. Fiore, ``Practical homomorphic MACs for arithmetic circuits,'' in {\em EUROCRYPT '13}: 336-–352.

\bibitem{FXZ22}
S. Feng, S. Xu, and L.F. Zhang, ``Multi-key homomorphic MACs with efficient verification for quadratic arithmetic circuits,'' in {\em ASIACCS '22}: 17--27.

\bibitem{ZS14}
L.F. Zhang and R. Safavi-Naini,
``Generalized homomorphic MACs with efficient verification,'' in  {\em AsiaPKC '14}: 3--12.


\end{thebibliography}
\end{document}